\newcommand{\Mod}[1]{\ (\mathrm{mod}\ #1)}
\newcommand{\vo}{\vec{}\@ifnextchar{^}{\,}{}}
\newtheorem{theorem}{Theorem}
\newtheorem{definition}{Definition}
\begin{document}

\title{\textbf{An Efficient Decomposition of the Carleman Linearized Burgers' Equation.}}

\author{Reuben Demirdjian}
\email{Reuben.Demirdjian.civ@us.navy.mil}
\affiliation{U.S. Naval Research Laboratory, Monterey, CA, 93943, United States}

\author{Thomas Hogancamp}
\affiliation{U.S. Naval Research Laboratory, Monterey, CA, 93943, United States}
\affiliation{American Society for Engineering Education, Washington, D.C.}

\author{Daniel Gunlycke}
\affiliation{U.S. Naval Research Laboratory, Washington, DC, 20375, United States}

\date{\today}

\begin{abstract}
    Herein, we present a polylogarithmic decomposition method to load the matrix from the linearized 1-dimensional Burgers' equation onto a quantum computer. First, we use the Carleman linearization method to map the nonlinear Burgers' equation into an infinite linear system of equations, which is subsequently truncated to order $\alpha$. This new finite linear system is then embedded into a larger system of equations with the key property that its matrix can be decomposed into a linear combination of $\mathcal{O}(\log n_t + \alpha^2\log n_x)$ terms for $n_t$ time steps and $n_x$ spatial grid points. While the terms in this linear combination are not unitary, each can be implemented using a simple block encoding procedure. A numerical simulation is performed by combining our approach with the variational quantuam linear solver demonstrating that accurate solutions are possible. Finally, a resource estimate shows that the upper bound of the Clifford and T gate counts scale like $\mathcal{O}(\alpha(\log n_x)^2)$ and $\mathcal{O}((\log n_x)^2)$, respectively. This is therefore the first explicit polylogarithmic data loading method with respect to $n_x$ and $n_t$ for a Carleman linearized system. 
\end{abstract}

\keywords{Quantum Information Processing, Partial Differential Equations, Linear Systems of Equations}
\maketitle

%-------------------------------------------------------------------------------
%Introduction
%-------------------------------------------------------------------------------

\section{Introduction}
Partial differential equations (PDEs) are ubiquitous in nearly all scientific and engineering disciplines, however, their solutions are rarely analytically known. Instead, PDEs are typically solved numerically using high performance computers along with discretization methods to find approximate solutions \cite{trefethen1996finite, strikwerda2004finite, thomas2013numerical}. In computational fluid dynamics (CFD) and numerical weather prediction (NWP), the computational resources available can limit model accuracy by constraining the grid size of spatial and temporal discretizations \cite{Ferziger2002}. A spatially coarse CFD or NWP model may be unable to resolve important small-scale features of the fluid (e.g. turbulence and convection) and instead rely on parameterization or closure methods to approximate their effects, ultimately leading to error growth that can eventually corrupt the solution \cite{Pope2001, bauer2015quiet, Palmer2001}. Therefore, an increase in computational resources enables finer spatial discretizations, which may allow for fewer or more accurate parameterizations and thereby a more accurate solution \cite{Dudhia2014}.

Quantum computing is an emerging field that can exponentially speedup specific applications \cite{Montanaro2016, Babbush2023, Dalzell2023} such as solving linear systems of equations \cite{VQLS, harrow2009quantum, childs2017quantum, Huang2021}. Since CFD models rely on solving nonlinear PDEs, the Carleman linearization method has been proposed to transform the original set of nonlinear PDEs into an infinite set of linear ordinary differential equations (ODEs), which are subsequently truncated into a finite set of linear ODEs \cite{Liu21, demirdjian2022variational, li2025potential}. The advantage of this method is that a quantum linear system algorithm (QLSA) may be applied to solve the set of linear ODEs and thereby obtain an approximate solution to the original nonlinear PDE. However, there are a number of challenges that must be solved if this is to be done efficiently, and it is currently an open question whether this, or any other method of solving nonlinear PDEs, are viable on quantum computers \cite{Gaitan2020, Lubasch2020, Oz2021, Gourianov2022, Lapworth2022, Ljubomir2022, Tennie2023, Krovi2023, Song2025, Jennings2024, penuel2024feasibility, Pool2024, Jin2024, Gonzalez-Conde2024Carleman, Lewis2024, Surana2024, Gnanasekaran2023, Gnanasekaran2024ConstrainedOpt, Surana2024PolynomialDynSys, Gourianov2025}.

One such challenge is the data loading problem for linear systems of equations. This can be understood by considering the variational quantum linear solver (VQLS) \cite{VQLS} -- a variational technique used to solve linear systems of equations of the form $L\vec{x}=\vec{b}$ where $L\in\mathbb{C}^{N\times N}$ and $\vec{x},\vec{b}\in\mathbb{C}^N$. The VQLS method, among other QLSAs, relies on the linear combination of unitaries \cite{childs2012hamiltonian}, whereby $L=\sum_{l=0}^{N_s-1}c_lA_l$ for complex coefficients $c_l$ and unitary matrices $A_l\in\mathbb{C}^{N\times N}$. While any square matrix $L$ is guaranteed to have a decomposition of this form, the VQLS algorithm is only efficient if $N_s \sim \mathcal{O}(\text{poly}(\text{log} \, N))$. This restriction comes from the fact that the number of circuits in the VQLS cost function scales like $\mathcal{O}(N_s^2)$ \cite{demirdjian2022variational}. This means that $N_s$ must have a practical bound; otherwise, the quantum advantage is lost simply by executing the large number of circuits to load the linear system. Similarly, each $A_l$ circuit depth must also be bounded by $\mathcal{O}(\text{poly}(\text{log} \, N))$, otherwise quantum advantage is again lost when preparing the individual circuits. Henceforth, we refer to the problem of finding a decomposition such that both the number of circuits $N_s$ and the $A_l$ circuit depths (or a block encoding thereof) are both bounded by $\mathcal{O}(\text{poly}(\text{log} \, N))$ as the \textit{decomposition problem}. This may also be thought of as the data loading problem, which is not generally efficient and therefore requires bespoke methods for each application \cite{Gunlycke2020, Sato2021, Ali2023, GonzalezConde2024, Bae2024, Williams2024,Hogancamp2026}.

Another challenge is that the conditions under which the Carleman linearization approach is applicable is currently a topic of research. In Liu \textit{et al.}\! \cite{Liu21}, they define the parameter $R$ as the ratio of nonlinearity to dissipation finding that the Carleman linearization error converges to zero exponentially with increasing truncation order for $R<1$. Conversely, for $R \ge 1$ they find that the truncation order must grow exponentially with time to bound the error. The latter result problematic because the size of a Carleman linearized system scales exponentially with the truncation order. That being said, their analytical analysis provides only an upper bound and they even numerically show that accurate solutions are possible in the $R \ge 1$ regime using small truncation orders. Furthermore, their analysis is limited to a specific class of dissipative systems. Jennings \textit{et al.}\! \cite{Jennings2025} extended their analysis to a broader class of stable systems by defining independent $R$-numbers for each type of system. Their analysis allows for improved error bounds thereby extending the types of nonlinear dynamical systems that can be efficiently simulated with quantum computers. However, as was found in the Liu \textit{et al.}\! \cite{Liu21} analysis, it may be that the bounds imposed by analytical analyses are too conservative, in which case learning what problems can be solved may come experimentally.

\subsection{Contributions}
In this study, we solve the decomposition problem for the 1-dimensional (1D) Carleman linearized Burgers' equation with periodic boundary conditions -- a paradigmatic nonlinear PDE. Figure \ref{fig:Schematic} illustrates the methods introduced here and contrasts them with the ones used in \cite{Liu21, demirdjian2022variational}. Both the previous and proposed methods follow the same two initial steps, first the 1D Burgers' equation is discretized (box a) and then the Carleman linearization method is applied (box b). At this point, our approaches deviate. In the previous method, one would decompose the matrix $A$ (box c) and then use a QLSA (like the VQLS) to obtain a solution (box d). However, there are no known $\text{poly}(\log\,N)$ decompositions for the basic Carleman linearized Burgers' equation \cite{Gnanasekaran2024ConstrainedOpt} and therefore quantum advantage is lost. In contrast, the proposed method embeds the Carleman linearized 1D Burgers' equation into an even larger system of equations with matrix $A^{\text{(e)}}$ (box e). The benefit of this additional layer of complexity is that $A^{\text{(e)}}$ can be efficiently decomposed into $\text{poly}(\log\,N)$ terms (box f) that can be implemented into a QLSA using circuits with $\text{poly}(\log\,N)$ depths (box g). The proposed method therefore offers a quantum advantage when used in combination a QLSA that requires a linear combination of unitaries like VQLS (box d).

\begin{figure*}
  \centering
  \includegraphics[scale=0.5]{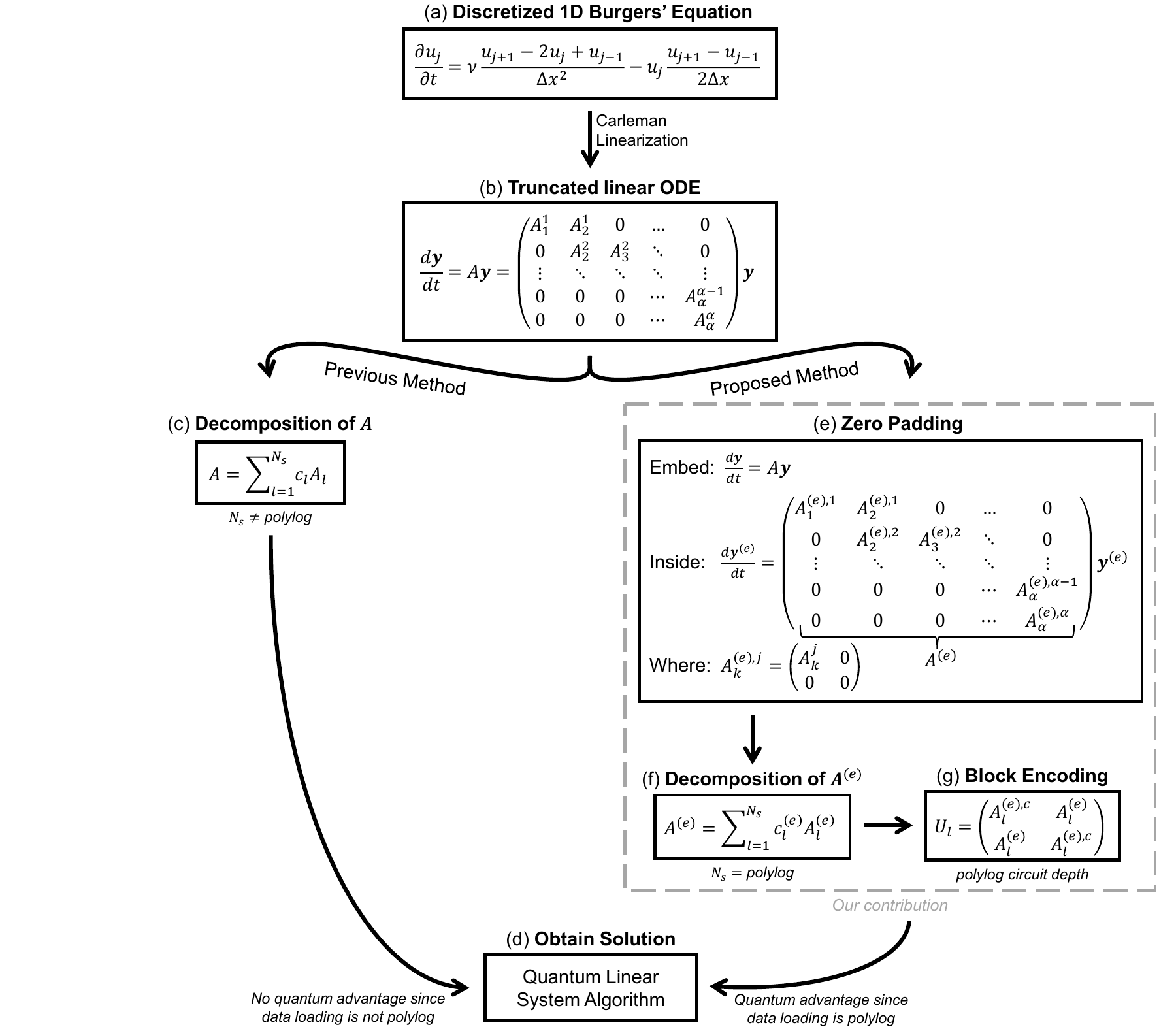}
  \caption{An illustration comparing the method proposed in this study (boxes a,b,e,f,g,d) with the previous method (boxes a,b,c,d). (a) The spatially discretized 1D Burgers' equation. (b) The truncated Carleman linearized 1D Burgers' equation. (c) Decomposition of the Carleman linearized matrix $A$ of which no known polylog decomposition exists \cite{Gnanasekaran2024ConstrainedOpt}. Note that the time discretization step is skipped in this simplification. (d) A QLSA to solve the linear system. (e) The zero padding method whereby the original system of equations $A$ are embedded in a larger system $A^{\text{(e)}}$. (f) The matrix $A^{\text{(e)}}$ is decomposed efficiently into $\text{poly}(\log\,N)$ terms. (g) Each $A_l^{\text{(e)}}$ matrix is block encoded with $\text{poly}(\log\,N)$ circuit depth.}
  \label{fig:Schematic}
\end{figure*}

The key insights presented in this paper are two-fold: (1) the zero padding method that enables us to decompose the matrix into a polylogarithmic number of terms, and (2) an extension of the block encoding method, introduced in Gnanasekaran and Surana (2024) \cite{GS24}, that enables us to efficiently block encode each term. Together, these two insights provide a polylogarithmic decomposition for the 1D Carleman linearized Burgers' equation. It is important to note that while the decomposition presented here is problem specific, we believe that the insights introduced can be generalized and applied to more complex problems. 

This work is structured as follows: In Section \ref{OverviewGS24} we present an overview of the relevant results from \cite{GS24}. The Carleman linearized 1D Burgers' equation from \cite{Liu21} is derived in Section \ref{CarlemanLin} and our zero padding approach is introduced in Section \ref{CarlemanEmbedding}. Next, in Section \ref{MatrixDecomposition} we derive an efficient decomposition for the zero padded matrix by splitting it into terms that are easily block encoded, which are subsequently presented in Section \ref{BlockEncoding}. Next, a resource estimation for our decomposition approach is performed in Section \ref{Complexity} and shown to be efficient (polylogarithmic). Our approach is then combined with the VQLS method to perform a numerical simulation in Section \ref{NumSims} demonstrating that it can be used to find accurate solutions. Finally, we present our conclusions in Section \ref{Conclusions} and discuss implications.

%-----------------------------------------------------------------------------------
%Overview of GS24
%-----------------------------------------------------------------------------------

\section{Overview of \cite{GS24}} \label{OverviewGS24}
Define the tau basis $\mathbb{T}=\{\tau_0,\tau_1,\tau_2,\tau_3\}$ and the sigma (Pauli) basis $\mathbb{S}=\{\sigma_0,\sigma_1,\sigma_2,\sigma_3\}$ where
\begin{equation}
    \tau_0=|0\rangle\langle 0|\,, \, \tau_1=|0\rangle\langle 1|\,, \, \tau_2=|1\rangle\langle 0|\,, \, \tau_3=|1\rangle\langle 1|\,, \notag
\end{equation}
and $\sigma_0=\sigma_x,\, \sigma_1=\sigma_y, \, \sigma_2=\sigma_z, \, \text{and } \sigma_3=I$.

Suppose we have a matrix $A \in \mathbb{C}^{N \times N}, \, N=2^Q$ for some integer $Q$. In the tau basis there exists a unique decomposition $A=\sum_{l=0}^{N_\tau-1}c_lC_l$ for $N_\tau$ terms where $C_l=\bigotimes_{k=0}^{Q-1}\tau_{v_k}$, for $\tau_{v_k} \in \mathbb{T}$, $v_k\in\{0,\dots,3\}$ and $c_l \in\mathbb{C}$. Similarly, in the sigma basis there exists a unique decomposition $A=\sum_{l=0}^{N_\sigma-1}d_lD_l$ for $N_\sigma$ terms where $D_l=\bigotimes_{k=0}^{Q-1}\sigma_{w_k}$, for $\sigma_{w_k} \in \mathbb{S}$, $w_k\in\{0,\dots,3\}$ and $d_l \in\mathbb{C}$. It is important to note that, while these decompositions always exist, the number of terms ($N_\tau$ and $N_\sigma$) may be exponential for an arbitrary matrix.

To circumvent this problem, \cite{Liu2021, GS24, GS2025_LCofThings} introduced a mixed tau and sigma set given by $\mathbb{P}=\{\rho_0,\rho_1,\rho_2,\rho_3,\rho_4\}$ where $\rho_0=\tau_0,\,\rho_1=\tau_1,\,\rho_2=\tau_2,\,\rho_3=\tau_3,\,\rho_4=\sigma_3$. Using this new set, there exist non-unique decompositions of the form $A=\sum_{l=0}^{N_s-1}a_lA_l$ where $A_l=\bigotimes_{k=0}^{Q-1}\rho_{r_k}$, for $\rho_{r_k} \in \mathbb{P}$, $r_k\in\{0,\dots,4\}$ and $a_l \in\mathbb{C}$. For specific matrices, \cite{GS24} shows that there exist decompositions with $N_s=\mathcal{O}(\text{poly}(\log\,N))$, providing an exponential improvement compared to that of the tau or sigma basis alone. One challenge presented with this method, however, is that the $A_l$ matrices are not unitary. To resolve this, \cite{GS24} shows that each $A_l$ can be systematically block encoded. Furthermore, they develop a method to implement these block encodings directly into VQLS. The following constructions are adapted from Section 4 of \cite{GS24}.

\begin{definition} \label{DefCompletionComplement}
    Suppose $W\subset V$ where $V$ is a Hilbert space. For a linear operator $F:W \rightarrow V$ that preserves inner products, the unitary operator $\overline{F}:V \rightarrow V$ is called a unitary completion when $\overline{F}$ spans the whole space $V$ and $\overline{F}|w\rangle=F|w\rangle \,\,\forall\,\,|w\rangle\in W$. Additionally, $F^c\coloneq \overline{F}-F$ is the unitary complement of $F$ and is unique for a specific choice of $\overline{F}$ given $F$.
\end{definition}
Note that, while \cite{GS24} uses the term unitary complement, we opted for the more general and widely used term unitary complement. Also, the unitary completion always exists and is not necessarily unique (see Def. 2 of \cite{GS24} and Ex. 2.67 of \cite{Mike&Ike}). Following Definition \ref{DefCompletionComplement}, Theorem 2 of \cite{GS24} describes how to construct $\overline{A}_l$ for decompositions in $\mathbb{P}$. If $A_l = \bigotimes_k\rho_{r_k}$, then $\overline{A}_l = \bigotimes_k\overline{\rho}_{r_k}$ where
\begin{equation} \label{eqn:rhoBar}
    \overline{\rho}_{r_k} =
    \begin{cases} 
        \sigma_0, & \rho_{r_k} \in \{\rho_1, \rho_2\} \\
        \sigma_3, & \rho_{r_k} \in \{\rho_0, \rho_3,\rho_4\} 
   \end{cases} \,.
\end{equation}

Therefore, each $A_l$ may be block encoded with an associated unitary matrix $U_l \in \mathbb{C}^{2N \times 2N}$ by
\begin{equation} \notag
    U_l = \begin{pmatrix} A_l^c & A_l \\ A_l & A_l^c \end{pmatrix} \,,
\end{equation}
where we have followed the convention of \cite{GS24} by encoding $A_l$ in the upper right block instead of the upper left. Furthermore, Theorem 3 of \cite{GS24} shows that $U_l$ can be implemented using at most $Q=\log N$ single qubit gates and a single $C^qX$ gate where $q\le Q$. Finally, they derive efficient quantum circuits to calculate the local VQLS cost function based on this block encoding strategy.

%-------------------------------------------------------------------------------
%Carleman Linearization of the 1D Burgers' Equation
%-------------------------------------------------------------------------------

\section{Carleman Linearization} \label{CarlemanLin}
The 1D Burgers' equation with periodic boundary conditions and domain length $L_x$ is given by
\begin{equation*}
\begin{gathered}
    \frac{\partial u}{\partial t}=\nu\frac{\partial^2 u}{\partial x^2} - u\frac{\partial u}{\partial x} \,, 
    \\[4pt]
    u(x,0)=u^0(x) \,,\quad u(0,t)=u(L_x,t) \,,
\end{gathered}
\end{equation*}
where $u(x,t)$ is the fluid velocity, $\nu$ is the diffusion coefficient. This can be discretized into
\begin{equation} \label{eqn:discreteBurgers}
\begin{gathered} 
	    \frac{\partial u_j}{\partial t} 
	    = \frac{\nu}{\Delta x^2}(u_{j+1} - 2u_j + u_{j-1}) \\
	    \qquad - \frac{u_j}{2\Delta x}(u_{j+1}-u_{j-1}) \,, 
	    \\[4pt]
    u_j(0) = u_j^0 \,, \quad u_0(t)=u_{n_x}(t) \,,
\end{gathered}
\end{equation}
where $\Delta x$ is the grid spacing and $\vec{u}=(u_0,\dots,u_{n_x-1})^T$ is the fluid velocity at each grid point. This can be rewritten in the form
\begin{equation} \notag
    \frac{\partial \vec{u}}{\partial t} = F_1 \vec{u} + F_2 \vec{u}^{\,\otimes 2}
    , \quad
    \vec{u}(0) = \vec{u}^{\,0} \,,
\end{equation}
where $F_1 \in \mathbb{C}^{n_x \times n_x}$ and $F_2 \in \mathbb{C}^{n_x \times n_x^2}$. Following \cite{Liu21}, the Carleman linearized 1D Burgers' equation with truncation order $\alpha=2^r$ for integer $r$ takes the form
\begin{equation}
\begin{gathered} \label{eqn:Carl}
    \frac{d\vec{y}}{dt} 
    = A\vec{y} \,, \\[4pt]
    \vec{y}(0) = ((\vec{u}^{\,0}),(\vec{u}^{\,0})^{\otimes 2}, \dots, (\vec{u}^{\,0})^{\otimes \alpha})^T \,,
\end{gathered}
\end{equation}
where
\begin{equation} \label{eqn:Adef}
    A \coloneq
    \begin{pmatrix}
        A_1^1 & A_2^1 & 0 & ... & 0\\
        0 & A_2^2 & A_3^2 & ...  & 0 \\
        \vdots & \dots & \ddots & \ddots & A_\alpha^{\alpha-1} \\
        0 & 0 & \dots & 0 & A_\alpha^\alpha
    \end{pmatrix} \,,
\end{equation}
and $\vec{y}=(\vec{u},\vec{u}^{\,\otimes 2},\dots,\vec{u}^{\,\otimes \alpha})^T \in \mathbb{C}^{\Delta}$, $\Delta=\sum_{j=1}^\alpha n_x^j$ and $A \in \mathbb{C}^{\Delta \times \Delta}$. Furthermore, $A_j^j\in\mathbb{C}^{n_x^j\times n_x^j}$ and $A_{j+1}^j\in\mathbb{C}^{n_x^j\times n_x^{j+1}}$ are defined as
\begin{subequations}
\label{equations}
\begin{align}
    \label{eqn:Ajj}
    A_j^j &\coloneq \sum_{l=0}^{j-1} I_{n_x}^{\otimes l} \otimes F_1 \otimes I_{n_x}^{\otimes j-l-1} , \\
    \label{eqn:Ajp1j}
    A_{j+1}^j &\coloneq \sum_{l=0}^{j-1} I_{n_x}^{\otimes l} \otimes F_2 \otimes I_{n_x}^{\otimes j-l-1} \,,
\end{align}
\end{subequations}
where $I_n\coloneq I^{\otimes \log n}$. Using the backward Euler discretization with $n_t$ time steps, \eqref{eqn:Carl} becomes
\begin{equation} \label{eqn:LYB}
    L\vec{Y} = \vec{B} \,,
\end{equation}
which is expanded into 
\begin{align} \notag
    &\begin{pmatrix}
        I  & 0 & \dots & 0 \\
        -I & M & \dots & 0 \\
        \vdots & \ddots & \ddots & \vdots \\
        0  & \dots & -I & M
    \end{pmatrix}
    \begin{pmatrix}
        \vec{y}^{\,0} \\
        \vec{y}^{\,1} \\
        \vdots \\
        \vec{y}^{\,n_t-1}
    \end{pmatrix}
    =
    \begin{pmatrix}
        \vec{y}^{\,0} \\
        \vec{0}_\Delta \\
        \vdots \\
        \vec{0}_\Delta
    \end{pmatrix} \,,
\end{align}
where $M=I-\Delta t A$, $L \in \mathbb{C}^{n_t\Delta \times n_t\Delta}$, $\vec{Y},\vec{B}\in\mathbb{C}^{n_t\Delta}$, $\vec{0}_\Delta$ is the zero vector of size $\Delta$, and $\vec{y}^{\,m}=\vec{y}(m\Delta t)$.

%-----------------------------------------------------------------------------------
%Zero Padding
%-----------------------------------------------------------------------------------

\section{Zero Padding} \label{CarlemanEmbedding}
Following the approach of \cite{GS24,Liu2021}, one would attempt to write the matrix in \eqref{eqn:Adef} as a linear combination of elements from $\mathbb{P}$. The sparsity and highly patterned structure of $A$ suggests that this can be done efficiently. However, the non-square $A^j_{j+1}$ terms create a serious technical impediment. To overcome this challenge, we embed \eqref{eqn:LYB} into a larger system in which judicious zero padding creates a convenient square block structure. First, we define $A_j^{\text{(e)},j}\in \mathbb{C}^{n_x^\alpha \times n_x^\alpha}$ by embedding the associated lower dimensional $A_j^j$ matrices given by
\begin{equation} \label{eqn:Aejj}
\begin{split}
    &A_j^{\text{(e)},j} \coloneq
    \rho_0^{\otimes \log n_x^{\alpha-j}} \otimes A_j^j \\[4pt]
    &=\begin{pmatrix}
        A_j^j & 0_{n_x^j \times (n_x^\alpha-n_x^j)} \\
        0_{(n_x^\alpha-n_x^j) \times n_x^j} & 0_{(n_x^\alpha-n_x^j)\times(n_x^\alpha-n_x^j)}
    \end{pmatrix} \,,
\end{split}
\end{equation}
where $j\in\{1,\dots,\alpha\}$ and $n_x=2^s$ for an integer $s$. Similarly, we define $A_{j+1}^{\text{(e)},j}\in \mathbb{C}^{n_x^\alpha \times n_x^\alpha}$  terms by embedding the $A_{j+1}^{j}$ matrix given by
\begin{equation} \label{eqn:Aejp1j}
\begin{split}
    A_{j+1}^{\text{(e)},j} &\coloneq
    \begin{pmatrix}
        A_{j+1}^j & 0_{n_x^j \times (n_x^\alpha-n_x^{j+1})} \\
        0_{(n_x^\alpha-n_x^j)\times n_x^{j+1}} & 0_{(n_x^\alpha-n_x^j)\times(n_x^{\alpha}-n_x^{j+1})}
    \end{pmatrix} \\[4pt]
    &= \rho_0^{\otimes\log(n_x^{\alpha-j-1})} \\
    &\quad\otimes
    \sum_{l=0}^{j-1}
    \Biggl[
    \Bigl( \rho_0^{\otimes \log n_x} \otimes K^{(n_x^l,n_x)} \Bigr) \\
    &\quad\qquad\cdot 
    \Biggl(
     \begin{pmatrix}
        F_2 \\
        0_{(n_x^2-n_x)\times n_x^2}
    \end{pmatrix} 
    \otimes I_{n_x}^{\otimes l} 
    \Biggr)\cdot K^{(n_x^2,n_x^l)} \Biggr] \\
    &\quad\otimes I_{n_x}^{\otimes j-l-1} \,,
\end{split}
\end{equation}
where $K^{(a,b)} \in \mathbb{C}^{(ab\times ab)}$ denotes the commutation matrix. See Appendix \ref{DerivationAejp1j} for \eqref{eqn:Aejp1j}'s full derivation.

We can now define an analogous version of the matrix $A$ in \eqref{eqn:Adef} given by
\begin{equation} \label{eqn:Ae}
\begin{split}
    A^{\text{(e)}}
    &\coloneq
    \begin{pmatrix}
        A_1^{\text{(e)},1} & A_2^{\text{(e)},1} & 0 & ... & 0\\
        0 & A_2^{\text{(e)},2} & A_3^{\text{(e)},2} & ...  & 0 \\
        \vdots & \dots & \ddots & \ddots & A_\alpha^{\text{(e)},\alpha-1} \\
        0 & 0 & \dots & 0 & A_\alpha^{\text{(e)},\alpha}
    \end{pmatrix} \\[4pt]
     &=
    \sum_{j=1}^\alpha (\rho_{f(b_\alpha(j-1),b_\alpha(j-1))} \otimes A_j^{\text{(e)},j}) \\
    &\quad+ \sum_{j=1}^{\alpha-1} (\rho_{f(b_\alpha(j-1),b_\alpha(j))} \otimes A_{j+1}^{\text{(e)},j}) \,,
\end{split}
\end{equation}
where $A^{\text{(e)}} \in \mathbb{C}^{\alpha n_x^\alpha \times \alpha n_x^\alpha}$. Following \cite{Gunlycke2020}, the function $f:\{0,1\}^K\times\{0,1\}^K\to \{0,1,2,3\}^K$ is defined as $f(i_K,j_K)=f_{K-1}\dots f_0$ where each quaternary bit is calculated by $f_k=2i_k+j_k$ for $i_K\coloneq i_{K-1}\dots i_0$, $j_K\coloneq j_{K-1}\dots j_0$ and $k=0,\dots,K-1$. The function $b_\beta(j)$ maps the base-ten number $j\in\{1,\dots,\alpha\}$ to a binary number with $\log\beta$ digits with $\beta=2^Q$ for some integer $Q$. Together, these functions are used to map row and column decimal indices into the quaternary bitstring $f_{K-1}\dots f_0$, allowing for the convenient shorthand notation: $\rho_{f_{K-1}}\otimes \dots\otimes\rho_{f_0}$. For clarity, Appendix \ref{QuaternaryMapping} shows several examples.

We may now define the embedded system of equations, analogous to \eqref{eqn:LYB}, as
\begin{equation} \label{eqn:LeYeBe}
    L^{\text{(e)}}\vec{Y}^{\text{(e)}} = \vec{B}^{\text{(e)}} \,, 
\end{equation}
where
\begin{equation}\label{eqn:FullBurgCarlSys}\
    L^{\text{(e)}} =
    \begin{pmatrix}
        I  & 0 & \dots & 0 \\
        -I & M^{\text{(e)}} & \dots & 0 \\
        \vdots & \ddots & \ddots & \vdots \\
        0  & \dots & -I & M^{\text{(e)}}
    \end{pmatrix} \,,
\end{equation}
and $L^{\text{(e)}}\in\mathbb{C}^{\alpha n_tn_x^\alpha \times \alpha n_tn_x^\alpha}$, $\vec{Y}^{\text{(e)}}=(\vec{y}^{\,\text{(e)},0},\dots\vec{y}^{\,\text{(e)},n_t-1})^T$, $\vec{B}^{\text{(e)}}=(\vec{y}^{\,\text{(e)},0},\vec{0}_{\alpha (n_t-1)n_x^\alpha})^T$, $M^{\text{(e)}}=I-\Delta t A^{\text{(e)}}$, and $\vec{y}^{\,\text{(e)},m}=((\vec{u}^{\,m}),\vec{z}_1,(\vec{u}^{\,m})^{\otimes 2},\vec{z}_2,\dots,(\vec{u}^{\,m})^{\otimes\alpha})^T$ for the $m^\text{th}$ time step and $\vec{z}_j\in\mathbb{C}^{n_x^\alpha-n_x^j}$. Note that the structure of the $A^{\text{(e)}}$ matrix will force the $\vec{z}_j$-vectors to be zero. The process used to obtain \eqref{eqn:LeYeBe} from \eqref{eqn:LYB}  is a specific zero padding approach to embed the original Carleman linearized system into a larger dimensional system of equations. In this case, the zero padded system has a polylogarithmic decomposition as will be shown.

%-----------------------------------------------------------------------------------
%Matrix Decomposition
%-----------------------------------------------------------------------------------

\section{Decomposition of $L^{\textnormal{(e)}}$} \label{MatrixDecomposition}
We now demonstrate how to decompose $L^{\text{(e)}}$ from \eqref{eqn:FullBurgCarlSys} into a linear combination of terms of the form 
\begin{equation} \label{eqn:LCNU}
    L^{\text{(e)}}=\sum_{l=0}^{N_s-1}c_lL_l
\end{equation}
where $c_l\in\mathbb{C}$ and the terms $L_l\in\mathbb{C}^{\alpha n_tn_x^\alpha \times \alpha n_tn_x^\alpha}$ are tensor products of certain well-known unitary matrices with elements in $\mathbb{P}$. First, separate the identity blocks by
\begin{equation} \label{eqn:Le}
    L^{\text{(e)}} = L_1^{\text{(e)}} - \Delta t L_2^{\text{(e)}} \,,
\end{equation}
where
\begin{equation} \notag
\begin{split}
    L_1^{\text{(e)}} &= 
    \begin{pmatrix}
        I  & 0 & \dots & 0 \\
        -I & I & \dots & 0 \\
        \vdots & \ddots & \ddots & \vdots \\
        0  & \dots & -I & I        
    \end{pmatrix} \,,
\end{split}
\end{equation}
and
\begin{equation} \notag
\begin{split}
    L_2^{\text{(e)}} &= 
    \begin{pmatrix}
        0  & 0 & \dots & 0 \\
        0 & A^{\text{(e)}} & \dots & 0 \\
        \vdots & \ddots & \ddots & \vdots \\
        0  & \dots & 0 & A^{\text{(e)}}
    \end{pmatrix} \,.
\end{split}
\end{equation}
Following \cite{GS24}, $L_1^{\text{(e)}}$ can be split into just $\log n_t+1$ terms provided by 
\begin{equation} \label{eqn:L1e}
\begin{split}
    L_1^{\text{(e)}} &= \bigg(\rho_4^{\otimes \log n_t} - \rho_4^{\otimes (\log(n_t)-1)}\otimes \rho_2 \\
    &\quad- \sum_{j=2}^{\log n_t} \rho_4^{\otimes (j-2)} \otimes \rho_2 \otimes \rho_1^{\otimes (\log(n_t)-j+1)}\bigg) \\
    &\quad \otimes \rho_4^{\otimes\log(\alpha n_x^\alpha)} \,,
\end{split}
\end{equation}
where $n_t=2^m$ for an integer $m$. Next, we split $L_2^{\text{(e)}}$ by
\begin{equation} \label{eqn:L2e}\
    L_2^{\text{(e)}} = \rho_4^{\otimes\log n_t} \otimes A^{\text{(e)}} - \rho_0^{\otimes \log n_t} \otimes A^{\text{(e)}} \,.
\end{equation}
By evaluating \eqref{eqn:Ae} into \eqref{eqn:L2e} we obtain
\begin{equation} \label{eqn:L2aL2b}
    L_2^{\text{(e)}} = L_{2a}^{\text{(e)}} + L_{2b}^{\text{(e)}} \,,
\end{equation}
where 
\begin{subequations}
    \begin{equation}\begin{split}\label{eqn:L2a}
        L_{2a}^{\text{(e)}} &= 
        \bigg(\big(\rho_4^{\otimes\log n_t} - \rho_0^{\otimes \log n_t}\big) \\
        &\otimes
        \sum_{j=1}^\alpha \big(\rho_{f(b_\alpha(j-1),b_\alpha(j-1))} \otimes A_j^{\text{(e)},j}\big)\bigg) \,,
    \end{split}\end{equation}
        \begin{equation}\begin{split}\label{eqn:L2b}
        L_{2b}^{\text{(e)}} &=
        \bigg(\big(\rho_4^{\otimes\log n_t} - \rho_0^{\otimes \log n_t}\big) \\
        &\otimes
        \sum_{j=1}^{\alpha-1} \big(\rho_{f(b_\alpha(j-1),b_\alpha(j))} \otimes A_{j+1}^{\text{(e)},j}\big)\bigg) \,.    
    \end{split}\end{equation}
\end{subequations}
$L_2^{\text{(e)}}$ has two types of terms: (1) the $L_{2a}^{\text{(e)}}$ terms associated with $A_j^{\text{(e)},j}$, and (2) the $L_{2b}^{\text{(e)}}$ terms associated with $A_{j+1}^{\text{(e)},j}$. We handle their decompositions separately in the next two subsections.

\subsection{Decomposition of $L_{2a}^{\textnormal{(e)}}$} \label{Decomp_Aejj}
First, by inserting \eqref{eqn:Ajj} into \eqref{eqn:Aejj} it can be seen that the $A_j^{\text{(e)},j}$ decomposition depends upon $F_1$. Conveniently, the $F_1$ term can be decomposed into $2\log n_x+3$ elements of $\mathbb{P}$ for the case of periodic boundary conditions provided by
\begin{equation} \label{eqn:A11}
\begin{split}
    \frac{\Delta x ^2}{\nu}F_1 &= -2\rho_4^{\otimes s} + \rho_4^{\otimes (s-1)}\otimes (\rho_1 + \rho_2) \\
    &\quad + \rho_1^{\otimes s} + \rho_2^{\otimes s} +
    \sum_{i=2}^s \rho_4^{\otimes (i-2)} \\
    &\quad \otimes\bigg(\rho_2\otimes\rho_1^{\otimes (s-i+1)} + \rho_1\otimes\rho_2^{\otimes (s-i+1)}\bigg) \,,
\end{split}
\end{equation}
where $n_x=2^s$. By inserting \eqref{eqn:A11}, \eqref{eqn:Ajj}, and \eqref{eqn:Aejj}, into \eqref{eqn:L2a} we can see that $L_{2a}^{\text{(e)}}$ is decomposed into a linear combination of purely elements from $\mathbb{P}$. It is therefore straightforward to calculate the VQLS cost function using the methods in \cite{GS24}. 

\subsection{Decomposition of $L_{2b}^{\textnormal{(e)}}$} \label{Decomp_Aejp1j}
Next, we look at the $L_{2b}^{\text{(e)}}$ terms. From \eqref{eqn:Aejp1j} it is plain to see that $A_{j+1}^{\text{(e)},j}$ depends upon the matrix $\begin{pmatrix} F_2 \\ 0_{(n_x^2-n_x)\times n_x^2} \end{pmatrix}$. To gain some insight into the structure of this matrix, we consider the case for $n_x=4$ shown in Appendix \ref{Example_n4}. In general, the nonzero elements exist only in the first $n_x$-rows, and each of these rows has exactly two nonzero elements. We can therefore split this matrix into two terms given by
\begin{equation} \label{eqn:A21}
    \begin{pmatrix} F_2 \\ 0_{(n_x^2-n_x)\times n_x^2} \end{pmatrix}=-(F^+ - F^-)/(2\Delta x) \,,
\end{equation}
where $F^+$ contains the $u_ju_{j+1}$ terms and $F^-$ contains the $u_ju_{j-1}$ terms from \eqref{eqn:discreteBurgers}. These matrices can be decomposed into products of a diagonal matrix and a permutation matrix by 
\begin{equation} \notag
    F^+ = \mathcal{D}P^+ , \;\;\; F^- = \mathcal{D}P^- \,,
\end{equation}
where $\mathcal{D},P^+,P^- \in \mathbb{C}^{n_x^2 \times n_x^2}$. The $\mathcal{D}$-matrix is defined by
\begin{equation} \label{eqn:D1}
\begin{split}
    \mathcal{D} &\coloneq 
    \begin{pmatrix}
        \rho_4^{\otimes \log n_x} & 0_{n_x \times (n_x^2-n_x)} \\
        0_{(n_x^2-n_x) \times n_x} & 0_{(n_x^2-n_x) \times (n_x^2-n_x)}
    \end{pmatrix} \\[4pt]
    &= \rho_0^{\otimes \log n_x} \otimes \rho_4^{\otimes \log n_x} \,.
\end{split}
\end{equation}
The permutation matrices $P^+$ and $P^-$ are not unique since they may be written as
\begin{equation} \notag
    P^+ = 
    \begin{pmatrix} F_2^+ \\ (F_2^+)^c \end{pmatrix}
    , \quad
    P^- = 
    \begin{pmatrix} F_2^- \\ (F_2^-)^c \end{pmatrix} \,,
\end{equation}
where $F_2^+, F_2^-\in \mathbb{C}^{n_x \times n_x^2}$ are the unique positive and negative element positions of $F_2$ respectively (see Appendix \ref{Example_n4}), and $(F_2^+)^c, (F_2^-)^c\in \mathbb{C}^{(n_x^2-n_x) \times n_x^2}$ are their unitary complements, which are not unique by Definition \ref{DefCompletionComplement}. As shown in Appendix \ref{Permutation_Matrices}, there exists a choice for $(F_2^+)^c$ and $(F_2^-)^c$ such that $P^+=P_2^+P_1$ and $P^-=P_2^-P_1$ for known $P_1$, $P_2^+$ and $P_2^-$. Their associated quantum circuits are 
\begin{subequations}\label{equations}
\begin{align}
    \label{eqn:P1}
    P_1 &= \prod_{q=0}^{s-1} CX(s-q-1,2s-q-1) \,, \\
    \label{eqn:P2p}
    P_2^+ &= X_0 \, CX(0,1) \left( \prod_{q=0}^{s-3} C^{q+2}X(0,\dots,q+2) \right) \,, \\    
    \label{eqn:P2m}
    P_2^- &= \left( \prod_{q=0}^{s-3} C^{a}X(0,\dots,a) \right) CX(0,1) \, X_0 \,,
\end{align}
\end{subequations}
where $a=s-q-1$ and $n_x=2^s$. Here, $C^jX(q_0,\dots,q_j)$ is a multi-control NOT gate whereby the first $q_0,\dots,q_{j-1}$ arguments are control qubits and the final $q_j$ argument is the target. Additionally, $CX(q_{j-1},q_j)$ is the CNOT gate with control on the $q_{j-1}$ qubit and target on the $q_j$ qubit, and $X_0$ is the NOT-gate applied to the $0^{\text{th}}$ qubit. Note that the complexity of the $P_2^+$ and $P_2^-$ matrices can be improved upon as discussed in \cite{Incrementer}.

The final component of \eqref{eqn:Aejp1j} to decompose is the commutation matrix, which is given by
\begin{equation} \label{eqn:com_mat}
    K^{(a,b)} = \prod_{r=0}^{n-1} \prod_{q=0}^{m-1} S(r+m-q-1,r+m-q) \,,
\end{equation}
where $a=2^m$, $b=2^n$, $S(i,j)$ is the SWAP gate between the $i^{\text{th}}$ and $j^{\text{th}}$ qubits. The circuit depth complexities for $P_1$, $P_2^-$, $P_2^+$, and $K^{(a,b)}$ are all polylogarithmic and are discussed in Section \ref{Complexity}.

%-----------------------------------------------------------------------------------
%Block Encoding
%-----------------------------------------------------------------------------------

\section{Block Encoding} \label{BlockEncoding}

The work in Section \ref{MatrixDecomposition} provides us with a linear combination of non-unitary matrices for $L^{\text{(e)}}$. The next step towards generalizing the technique of \cite{GS24} requires us to block encode each term of this linear combination. If the general form of the original linear combination is given in \eqref{eqn:LCNU}, then we must block encode each $L_l$ into a unitary matrix $U_l\in \mathbb{C}^{2\alpha n_tn_x^\alpha \times 2\alpha n_tn_x^\alpha}$ where $U_l=U_{l,1}U_{l,2}$ as discussed in \cite{GS24}. For convenience, in the remainder of this section the subscript $l$ is dropped. 

As discussed in Section \ref{MatrixDecomposition}, $L^{\text{(e)}}$ is split into three types of terms $L_1^{\text{(e)}}$, $L_{2a}^{\text{(e)}}$ and $L_{2b}^{\text{(e)}}$. Since both the $L_1^{\text{(e)}}$ and $L_{2a}^{\text{(e)}}$ terms were shown in Section \ref{MatrixDecomposition} to be decomposed into purely elements from $\mathbb{P}$, they can be treated following \cite{GS24}. In contrast, the $L_{2b}^{\text{(e)}}$ terms are decomposed into products of elements from $\mathbb{P}$ with the permutation matrices introduced in Section \ref{Decomp_Aejp1j}. The remainder of this section will focus on demonstrating that the methods in \cite{GS24} can be extended to block encode the $L_{2b}^{\text{(e)}}$ terms.

By evaluating \eqref{eqn:Aejp1j} into \eqref{eqn:L2b}, we can see that the $L_{2b}^{\text{(e)}}$ terms have the general form
\begin{equation} \label{eqn:GeneralAejp1j}
\begin{split}
    \mathcal{A}&=\Bigl( \bigotimes_{k=0}^{Q_1-1} \rho_{r_k} \Bigr) \\
    &\quad\otimes 
    \Biggl(
    \Bigl( \rho_0^{\otimes \log n_x} \otimes K^{(n_x^l,n_x)} \Bigr) \\
    &\quad\cdot \Bigl(\mathcal{D}P \otimes I_{n_x}^{\otimes l} 
    \Bigr)
    \cdot K^{(n_x^2,n_x^l)} \Biggr)
    \otimes I_{n_x}^{\otimes j-l-1} \,,
\end{split}
\end{equation}
where $\mathcal{A}\in\mathbb{C}^{\alpha n_tn_x^\alpha \times \alpha n_tn_x^\alpha}$, $\rho_{r_k}\in \mathbb{P}$, $\,r_k\in\{0,\dots,4\}$, $\,P \in \{P^+,P^-\}$, $\mathcal{D}$ is defined in \eqref{eqn:D1}, and $Q_1=\log(\alpha n_tn_x^\alpha/n_x^{j+1})$.

\begin{theorem} \label{ACompletionTheorem}
One choice of unitary completion for \eqref{eqn:GeneralAejp1j} is given by 
\begin{equation} \label{eqn:ABar}
\begin{split}
    \overline{\mathcal{A}}&=\Bigl( \bigotimes_{k=0}^{Q_1-1} \overline{\rho}_{r_k} \Bigr) \\
    &\quad\otimes 
    \Biggl(
    \Bigl( \rho_4^{\otimes \log n_x} \otimes K^{(n_x^l,n_x)} \Bigr)\\
    &\quad\cdot \Bigl(P \otimes I_{n_x}^{\otimes l} \Bigr) \cdot K^{(n_x^2,n_x^l)} \Biggr)
    \otimes I_{n_x}^{\otimes j-l-1} \,,
\end{split}
\end{equation} 
where $\overline{\rho}_k$ is defined in \eqref{eqn:rhoBar}. 
\end{theorem}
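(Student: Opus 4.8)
The plan is to verify the two requirements of Definition~\ref{DefCompletionComplement}: that $\overline{\mathcal{A}}$ is unitary on the whole space, and that it agrees with $\mathcal{A}$ on the subspace $W$ where $\mathcal{A}$ preserves inner products. Unitarity is the easy half. I would observe that every tensor factor of $\overline{\mathcal{A}}$ in \eqref{eqn:ABar} is unitary: each $\overline{\rho}_{r_k}$ lies in $\{\sigma_0,\sigma_3\}$ by \eqref{eqn:rhoBar}; the factor $\rho_4^{\otimes\log n_x}\otimes K^{(n_x^l,n_x)}$ equals $I_{n_x}\otimes K^{(n_x^l,n_x)}$, a tensor of the identity with a (unitary) commutation matrix; $P\otimes I_{n_x}^{\otimes l}$ is a permutation since $P\in\{P^+,P^-\}$; and $K^{(n_x^2,n_x^l)}$ is a commutation matrix. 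Since products and tensor products of unitaries are unitary, $\overline{\mathcal{A}}$ is unitary.

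For the agreement condition I would first establish a tensor-product lemma: if $\overline{F}_i$ is a unitary completion of $F_i$ on a domain $W_i$, then $\bigotimes_i\overline{F}_i$ is a unitary completion of $\bigotimes_i F_i$ on $\bigotimes_i W_i$, because inner products and the defining agreement relation both factor across a tensor product. This reduces the problem to the three tensor factors appearing in \eqref{eqn:GeneralAejp1j}: the leading $\bigotimes_{k=0}^{Q_1-1}\rho_{r_k}$, the trailing $I_{n_x}^{\otimes j-l-1}$, and the middle block. For the leading factor, Theorem~2 of \cite{GS24} provides $\bigotimes_k\overline{\rho}_{r_k}$ as a completion directly; the trailing factor is already unitary and is its own completion.

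The substance is the middle factor, where the only sources of non-unitarity are the two projectors $\rho_0^{\otimes\log n_x}$ and $\mathcal{D}$. Writing $\Pi=\rho_0^{\otimes\log n_x}\otimes I_{n_x^{l+1}}$, I expect the identity $\rho_0^{\otimes\log n_x}\otimes K^{(n_x^l,n_x)}=\Pi\,(I_{n_x}\otimes K^{(n_x^l,n_x)})$ and, using $\mathcal{D}=\rho_0^{\otimes\log n_x}\otimes\rho_4^{\otimes\log n_x}$ with the mixed-product property, $\mathcal{D}P\otimes I_{n_x}^{\otimes l}=\Pi\,(P\otimes I_{n_x}^{\otimes l})$, so that the two projectors coincide. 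Because $I_{n_x}\otimes K^{(n_x^l,n_x)}$ acts as the identity on the first $\log n_x$ qubits it commutes with $\Pi$, and together with $\Pi^2=\Pi$ this collapses the two inserted projectors into one, giving $M=\Pi\,\overline{M}$ for the middle block $M$ and its candidate completion $\overline{M}$. The maximal subspace on which $M$ is an isometry is then $W_M=\{\,|w\rangle:\overline{M}|w\rangle\in\mathrm{range}(\Pi)\,\}$, on which $\Pi$ acts trivially so that $M|w\rangle=\Pi\overline{M}|w\rangle=\overline{M}|w\rangle$ and $M$ restricted to $W_M$ inherits the isometric property of the unitary $\overline{M}$. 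Assembling the three factors through the tensor-product lemma then yields $\overline{\mathcal{A}}$ as a unitary completion of $\mathcal{A}$ on $W=W_1\otimes W_M\otimes V_3$, where $V_3$ is the full space of the trailing factor.

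The main obstacle I anticipate is the bookkeeping in the middle factor: verifying the two algebraic rewritings requires tracking precisely which qubit registers the commutation matrices and projectors act on, so that the claimed commutation with $\Pi$ and the coincidence of the two projectors genuinely hold, and it requires checking degenerate cases such as $l=0$, where $K^{(n_x^l,n_x)}$ and $K^{(n_x^2,n_x^l)}$ reduce to identities and $M$ collapses to $\mathcal{D}P$. The conceptual content beyond \cite{GS24} is exactly this identity $M=\Pi\overline{M}$, which makes precise why simultaneously replacing $\rho_0^{\otimes\log n_x}$ by $\rho_4^{\otimes\log n_x}$ and $\mathcal{D}P$ by $P$ produces the unitary completion.
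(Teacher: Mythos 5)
Your proposal is correct, but it takes a genuinely different route from the paper's own proof (Appendix \ref{ACompletionProof}). The paper never exhibits a subspace $W$ at all: it argues that it suffices to show that the $2\times 2$ block matrix $U$ built from $\mathcal{A}$ and $\mathcal{A}^c=\overline{\mathcal{A}}-\mathcal{A}$ is unitary, and it verifies $UU^T=I$ by expanding everything in $\mathcal{A}$ and $\overline{\mathcal{A}}$, using the mixed-product property and a single-qubit table showing $\overline{\rho}_{r_k}\rho_{r_k}^T=\rho_{r_k}\overline{\rho}_{r_k}^T=\rho_{r_k}\rho_{r_k}^T$, which gives $\overline{\mathcal{A}}\,\overline{\mathcal{A}}^T=I$ and $\overline{\mathcal{A}}\mathcal{A}^T=\mathcal{A}\overline{\mathcal{A}}^T=\mathcal{A}\mathcal{A}^T$, hence $\mathcal{A}^c\mathcal{A}^T=\mathcal{A}\mathcal{A}^{cT}=0$ and $\mathcal{A}^c\mathcal{A}^{cT}=I-\mathcal{A}\mathcal{A}^T$. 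You instead verify Definition \ref{DefCompletionComplement} head-on: unitarity of $\overline{\mathcal{A}}$ factor by factor, a tensor-product lemma for completions, and---the real content---the projector factorization $M=\Pi\overline{M}$ of the middle block; your supporting identities ($\mathcal{D}\otimes I_{n_x}^{\otimes l}=\Pi$ via \eqref{eqn:D1}, commutation of $I_{n_x}\otimes K^{(n_x^l,n_x)}$ with $\Pi$, idempotence of $\Pi$) all check out, including the degenerate $l=0$ case. The two arguments carry essentially the same algebra---your $\Pi$ is the middle-register part of the paper's $\mathcal{A}\mathcal{A}^T$, and $M=\Pi\overline{M}$ is equivalent to $\mathcal{A}=\mathcal{A}\mathcal{A}^T\overline{\mathcal{A}}$---but they buy different things. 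The paper's computation produces $\mathcal{A}\mathcal{A}^T$ in closed form, which is reused for the $C^qX$ implementation of $U_1$ and again in Appendix \ref{U1U2Proof}, and its block-matrix criterion is exactly the operational statement the block-encoding application needs. Your route is more faithful to the stated definition, since you actually exhibit the subspace on which $\mathcal{A}$ preserves inner products (the paper leaves it implicit, and strictly speaking never checks the agreement condition $\overline{\mathcal{A}}|w\rangle=\mathcal{A}|w\rangle$), and it isolates conceptually why swapping $\rho_0^{\otimes\log n_x}\to\rho_4^{\otimes\log n_x}$ and $\mathcal{D}P\to P$ yields the completion; moreover, $\mathcal{A}=\mathcal{A}\mathcal{A}^T\overline{\mathcal{A}}$ is precisely the first condition the paper proves separately in Appendix \ref{U1U2Proof}, so your lemma would streamline that proof as well. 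One small nit: your claim that $W_M$ is the \emph{maximal} isometry subspace is unproven and unnecessary---Definition \ref{DefCompletionComplement} only requires the existence of some $W$.
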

\begin{proof}
    See Appendix \ref{ACompletionProof}.
\end{proof}

Theorem \ref{ACompletionTheorem} shows that a simple procedure exists for each unitary completion of the $L_{2b}^{\text{(e)}}$ terms. Next, using this result we show that matrices of the form \eqref{eqn:GeneralAejp1j} have a simple block encoding.
\begin{theorem} \label{U1U2}
For a matrix $\mathcal{A}$ as defined in \eqref{eqn:GeneralAejp1j}, the following relations are true:
\begin{equation} \notag
    U \coloneq 
    \begin{pmatrix}
        \mathcal{A}^c & \mathcal{A} \\
        \mathcal{A} & \mathcal{A}^c
    \end{pmatrix}
    = U_1U_2 \,,
\end{equation}
where we have
\begin{equation} \notag
\begin{split}
    U_1 &\coloneq
    \begin{pmatrix}
        I - \mathcal{A}\mathcal{A}^T & \mathcal{A}\mathcal{A}^T \\
        \mathcal{A}\mathcal{A}^T & I - \mathcal{A}\mathcal{A}^T
    \end{pmatrix} \,, \\
    U_2 &\coloneq
    \begin{pmatrix}
        \overline{\mathcal{A}} & 0 \\
        0 & \overline{\mathcal{A}} \\
    \end{pmatrix} \,.
\end{split}
\end{equation}
Moreover, both $U_1$ and $U_2$ are unitary matrices. 
\end{theorem}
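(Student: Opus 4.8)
The plan is to first verify the factorization $U = U_1U_2$ and then check that each factor is unitary. Writing $\Pi \coloneq \mathcal{A}\mathcal{A}^T$ and multiplying the two block matrices, the product $U_1U_2$ has diagonal blocks equal to $(I-\Pi)\overline{\mathcal{A}}$ and off-diagonal blocks equal to $\Pi\overline{\mathcal{A}}$. Comparing with the blocks of $U$ and substituting $\mathcal{A}^c = \overline{\mathcal{A}} - \mathcal{A}$ from Definition \ref{DefCompletionComplement}, both block identities collapse to the single operator equation $\Pi\overline{\mathcal{A}} = \mathcal{A}$: the off-diagonal requires $\Pi\overline{\mathcal{A}} = \mathcal{A}$ directly, while the diagonal requires $(I-\Pi)\overline{\mathcal{A}} = \mathcal{A}^c = \overline{\mathcal{A}} - \mathcal{A}$, which is the same statement. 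The factorization therefore reduces entirely to proving $\mathcal{A}\mathcal{A}^T\overline{\mathcal{A}} = \mathcal{A}$.

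To handle this I would first record two structural facts about $\mathcal{A}$ in \eqref{eqn:GeneralAejp1j}. First, every factor appearing in $\mathcal{A}$ --- the elements of $\mathbb{P}$, the diagonal block $\mathcal{D}$ of \eqref{eqn:D1}, the permutation matrices $P^{\pm}$, and the commutation matrices $K$ --- has real entries, so $\mathcal{A}$ is real and $\mathcal{A}^T = \mathcal{A}^\dagger$; this is what makes $\Pi = \mathcal{A}\mathcal{A}^T$ Hermitian, and it is easy to overlook yet essential. Second, since Theorem \ref{ACompletionTheorem} exhibits a unitary completion $\overline{\mathcal{A}}$ of $\mathcal{A}$, Definition \ref{DefCompletionComplement} guarantees that $\mathcal{A}$ restricted to its support $W$ preserves inner products; equivalently $\mathcal{A}$ is a partial isometry. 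Consequently $\Pi = \mathcal{A}\mathcal{A}^\dagger$ is the orthogonal projection onto $R \coloneq \mathrm{range}(\mathcal{A})$, so that $\Pi^2 = \Pi = \Pi^\dagger$ and $\Pi\mathcal{A} = \mathcal{A}\mathcal{A}^\dagger\mathcal{A} = \mathcal{A}$.

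With these in hand I would prove $\Pi\overline{\mathcal{A}} = \mathcal{A}$ by splitting $\overline{\mathcal{A}} = \mathcal{A} + \mathcal{A}^c$. The first term gives $\Pi\mathcal{A} = \mathcal{A}$ by the partial-isometry identity above. For the second, note that $\mathcal{A}^c = \overline{\mathcal{A}} - \mathcal{A}$ vanishes on $W$ and agrees with $\overline{\mathcal{A}}$ on $W^\perp$, so $\mathrm{range}(\mathcal{A}^c) = \overline{\mathcal{A}}(W^\perp)$. Because $\overline{\mathcal{A}}$ is unitary it carries the orthogonal decomposition $V = W \oplus W^\perp$ to $\overline{\mathcal{A}}(W) \oplus \overline{\mathcal{A}}(W^\perp) = R \oplus R^\perp$, whence $\mathrm{range}(\mathcal{A}^c) = R^\perp = \ker\mathcal{A}^\dagger$ and therefore $\mathcal{A}^\dagger\mathcal{A}^c = 0$. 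This yields $\Pi\mathcal{A}^c = \mathcal{A}(\mathcal{A}^\dagger\mathcal{A}^c) = 0$, so $\Pi\overline{\mathcal{A}} = \mathcal{A} + 0 = \mathcal{A}$, completing the factorization.

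Unitarity of the factors is then routine. $U_2$ is block diagonal with $\overline{\mathcal{A}}$ on each diagonal block, and $\overline{\mathcal{A}}$ is unitary by Theorem \ref{ACompletionTheorem}, so $U_2^\dagger U_2 = I$. For $U_1$, the projection properties $\Pi^2 = \Pi = \Pi^\dagger$ make $U_1$ Hermitian, and a direct block computation gives $U_1^2 = I$ --- the diagonal blocks are $(I-\Pi)^2 + \Pi^2 = (I-\Pi) + \Pi = I$ and the off-diagonal blocks are $(I-\Pi)\Pi + \Pi(I-\Pi) = 0$ --- so $U_1$ is a Hermitian involution and hence unitary. The main obstacle is the key identity $\Pi\overline{\mathcal{A}} = \mathcal{A}$ of the previous paragraph, and within it the geometric claim that the unitary $\overline{\mathcal{A}}$ sends $W^\perp$ into $R^\perp$; the subtlety is that this, together with the reality of $\mathcal{A}$ needed to identify $\mathcal{A}\mathcal{A}^T$ with the Hermitian projection $\mathcal{A}\mathcal{A}^\dagger$, is precisely where the specific structure of \eqref{eqn:GeneralAejp1j} and \eqref{eqn:ABar} enters, rather than in any lengthy computation.
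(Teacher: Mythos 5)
Your proof is correct, and it reaches the result by a genuinely different route than the paper's (Appendix \ref{U1U2Proof}). The paper works purely algebraically, reusing identities from the proof of Theorem \ref{ACompletionTheorem}: from $\mathcal{A}\mathcal{A}^T=\mathcal{A}\overline{\mathcal{A}}^T$ (established entrywise via Table \ref{table:ACompTable1}) and the unitarity of $\overline{\mathcal{A}}$ it computes $\mathcal{A}\mathcal{A}^T\overline{\mathcal{A}}=\mathcal{A}\overline{\mathcal{A}}^T\overline{\mathcal{A}}=\mathcal{A}$, then checks $\mathcal{A}^c=(I-\mathcal{A}\mathcal{A}^T)\overline{\mathcal{A}}$ separately, and gets unitarity of $U_1$ from the explicit tensor expression \eqref{eqn:AAT}, which exhibits $\mathcal{A}\mathcal{A}^T$ as a binary diagonal (hence idempotent) matrix. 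You instead argue operator-theoretically: $\mathcal{A}$ is a real partial isometry, so $\Pi=\mathcal{A}\mathcal{A}^T$ is the orthogonal projection onto $\mathrm{range}(\mathcal{A})$ with $\Pi\mathcal{A}=\mathcal{A}$, and $\Pi\mathcal{A}^c=0$ because the unitary $\overline{\mathcal{A}}$ maps $W^{\perp}$ onto $\mathrm{range}(\mathcal{A})^{\perp}$; your observation that both block conditions collapse to the single identity $\Pi\overline{\mathcal{A}}=\mathcal{A}$ is a small economy the paper does not make explicit. Your route buys generality: it never touches the tensor structure of \eqref{eqn:GeneralAejp1j}, so the factorization $U=U_1U_2$ holds verbatim for any real partial isometry together with any of its unitary completions. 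The paper's route keeps everything as direct computation and, in particular, produces the binary-diagonal form of $\mathcal{A}\mathcal{A}^T$ that is reused immediately afterwards to implement $U_1$ with a single $C^qX$ gate. The one soft spot in your argument is that you obtain the partial-isometry property of $\mathcal{A}$ by reading it off from the well-posedness of Theorem \ref{ACompletionTheorem} under Definition \ref{DefCompletionComplement}, rather than verifying it; this inference is legitimate here (and the identities of Appendix \ref{ACompletionProof}, namely $\mathcal{A}\mathcal{A}^T=\mathcal{A}\overline{\mathcal{A}}^T=\overline{\mathcal{A}}\mathcal{A}^T$ with $\overline{\mathcal{A}}$ unitary, indeed force $(\mathcal{A}\mathcal{A}^T)^2=\mathcal{A}\mathcal{A}^T$), but making that one-line verification explicit would render your argument self-contained.
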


\begin{proof}
    See Appendix \ref{U1U2Proof}.
\end{proof}

Theorem \ref{U1U2} demonstrates that the block encoded matrix $U$ can be implemented by two simpler unitary operations. The following two theorems show that each of these unitary operations have have polylogarithmic gate-depths.
\begin{theorem}
For $\mathcal{A}$ defined as in \eqref{eqn:GeneralAejp1j}, the $U_1$ matrix given by Theorem \ref{U1U2} can be implemented with a single $C^qX$ gate, where $q\le \log(\alpha n_tn_x^\alpha)$.
\end{theorem}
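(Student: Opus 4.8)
The plan is to reduce $U_1$ to a multi-controlled NOT by first identifying the Hermitian operator $P \coloneq \mathcal{A}\mathcal{A}^T$ appearing in Theorem \ref{U1U2} as a diagonal projector onto a coordinate subspace. Since Theorem \ref{U1U2} already guarantees that $U_1$ is unitary, its block form forces $P^2 = P$ and $P^T = P$, so $P$ is genuinely a projector; the real work is to exhibit its explicit tensor-product structure. Because every factor of $\mathcal{A}$ in \eqref{eqn:GeneralAejp1j} has real entries, $\mathcal{A}^T = \mathcal{A}^\dagger$ and the evaluation of $\mathcal{A}\mathcal{A}^T$ is purely combinatorial.

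First I would compute $\mathcal{A}\mathcal{A}^T$ factor by factor along the tensor decomposition of \eqref{eqn:GeneralAejp1j}. The outer factor $\bigotimes_k \rho_{r_k}$ contributes $\bigotimes_k \rho_{r_k}\rho_{r_k}^T$, and a direct check of the five elements of $\mathbb{P}$ gives $\rho_{r_k}\rho_{r_k}^T \in \{|0\rangle\langle 0|,\,|1\rangle\langle 1|,\,I\}$, i.e.\ a single-qubit diagonal projector or the identity on each qubit. For the middle factor I would use that the commutation matrices $K^{(\cdot,\cdot)}$ and the permutation matrix $P \in \{P^+,P^-\}$ are orthogonal, so $K K^T = I$ and $PP^T = I$, together with the fact that $\mathcal{D}$ in \eqref{eqn:D1} is itself a projector, $\mathcal{D}^2 = \mathcal{D}$. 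Cancelling the outer commutation matrices against their transposes and using $(\mathcal{D}P)(\mathcal{D}P)^T = \mathcal{D}P P^T \mathcal{D} = \mathcal{D}$ collapses the middle block to $\rho_0^{\otimes \log n_x} \otimes I_{n_x}^{\otimes(l+1)}$, again a diagonal coordinate projector, while the trailing $I_{n_x}^{\otimes(j-l-1)}$ is manifestly untouched. Assembling the pieces yields $P = \bigotimes_i \Pi_i$, where each $\Pi_i \in \{|0\rangle\langle 0|,\,|1\rangle\langle 1|,\,I\}$ acts on a single qubit.

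With $P$ in this form, I would rewrite the block matrix of Theorem \ref{U1U2} by treating the block index as an ancilla qubit, obtaining $U_1 = I_2 \otimes (I - P) + X \otimes P$, which one verifies reproduces the stated $2\times 2$ block structure. This operator flips the ancilla exactly on the support of $P$ and acts trivially elsewhere, so it is precisely the multi-controlled NOT whose target is the ancilla and whose controls are the indices $i$ for which $\Pi_i$ is rank one (controlling on $|1\rangle$ when $\Pi_i = |1\rangle\langle 1|$ and on $|0\rangle$ when $\Pi_i = |0\rangle\langle 0|$). The number of controls $q$ is the number of such factors, which is at most the total number of non-ancilla qubits $\log(\alpha n_t n_x^\alpha)$, giving the claimed bound $q \le \log(\alpha n_t n_x^\alpha)$.

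The main obstacle is the tensor bookkeeping in the middle factor: one must track how $K^{(n_x^l,n_x)}$ and $K^{(n_x^2,n_x^l)}$ reshuffle the tensor legs and confirm that, after multiplication by the transposed block, they cancel cleanly against $\mathcal{D}$ to leave a pure coordinate projector rather than an entangled projection. Once that cancellation is verified, the identification of $U_1$ as a single $C^qX$ gate is immediate.
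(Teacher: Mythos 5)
Your proposal is correct and follows essentially the same route as the paper: both hinge on computing $\mathcal{A}\mathcal{A}^T$ via the mixed-product property, using $K K^T = I$, $PP^T = I$, and $\mathcal{D}^2 = \mathcal{D}$ to collapse the middle block so that $\mathcal{A}\mathcal{A}^T$ becomes a tensor product of factors from $\{\rho_0,\rho_3,\rho_4\}$ (a binary diagonal projector), which then realizes $U_1$ as a single multi-controlled NOT with $q$ bounded by the qubit count $\log(\alpha n_t n_x^\alpha)$. The only difference is that you write out explicitly what the paper delegates to its Appendix (the computation of $\mathcal{A}\mathcal{A}^T$ in \eqref{eqn:AAT}) and to Theorem 3 of \cite{GS24} (the identification $U_1 = I\otimes(I-P) + \sigma_0\otimes P$ as a $C^qX$ with open and closed controls), which makes your argument self-contained but not substantively different.
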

\begin{proof}
    First, observe that $\rho_{r_k}\rho_{r_k}^T\in\{\rho_0,\rho_3,\rho_4\}$ for $\rho_{r_k}\in\mathbb{P}$. Using this property and by evaluating \eqref{eqn:D1} into \eqref{eqn:AAT}, it follows that  $\mathcal{A}\mathcal{A}^T$ is composed solely of terms from the set $\{\rho_0,\rho_3,\rho_4\}$. Thus, $\mathcal{A}\mathcal{A}^T$ is a binary diagonal matrix exactly as in Theorem 3 of \cite{GS24} and, therefore, their proof that $U_1$ can be implemented with a single multi-control gate is also applicable here. Following \cite{GS24}, the upper bound on $q$ simply comes from the number of qubits required to implement $\mathcal{A}$, which in this case is $\log(\alpha n_tn_x^\alpha)$.
\end{proof}
The explicit circuit implementation of the $U_1$ matrix is given in the proof of Theorem 3 in \cite{GS24}. An important result of theirs is that the number of control qubits is equal to the number of $\rho_{r_k}\rho_{r_k}^T\in\mathbb{P}\backslash \{\rho_4\}$ terms in the $\mathcal{A}\mathcal{A}^T$ expansion. Next, we show that the $U_2$ circuit is also efficient.

\begin{theorem} \label{U2}
For $\mathcal{A}$ defined as in \eqref{eqn:GeneralAejp1j}, the $U_2$ matrix given by Theorem \ref{U1U2} can be implemented with gate depth equal to the combined depths of $P$, $K^{(n_x^l,n_x)}$, and $K^{(n_x^2,n_x^l)}$ plus at most $\log(\alpha n_tn_x^{\alpha-2})$ Pauli-$X$ gates.
\end{theorem}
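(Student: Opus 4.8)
The plan is to notice that $U_2$ is block diagonal with two identical blocks, so from Theorem \ref{U1U2} we have $U_2 = I_2\otimes\overline{\mathcal{A}}$. Implementing $U_2$ therefore amounts to implementing $\overline{\mathcal{A}}$ on the system register while leaving the block-encoding ancilla untouched, and the gate depth of $U_2$ equals that of $\overline{\mathcal{A}}$. The whole statement thus reduces to counting the gates in the explicit tensor-product form of $\overline{\mathcal{A}}$ furnished by Theorem \ref{ACompletionTheorem}, namely \eqref{eqn:ABar}.

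First I would discard the factors that cost nothing. By \eqref{eqn:rhoBar} each $\overline{\rho}_{r_k}$ is either $\sigma_0$ (a Pauli-$X$) or $\sigma_3 = I$, so the leading factor $\bigotimes_{k=0}^{Q_1-1}\overline{\rho}_{r_k}$ produces only single-qubit $X$ gates, at most one per wire, on $Q_1 = \log(\alpha n_t n_x^{\alpha}/n_x^{j+1})$ qubits. Likewise $\rho_4^{\otimes\log n_x} = \sigma_3^{\otimes\log n_x} = I$ and the trailing $I_{n_x}^{\otimes j-l-1}$ are identities requiring no gates. What survives inside the large parentheses of \eqref{eqn:ABar} is the matrix product $(I\otimes K^{(n_x^l,n_x)})\,(P\otimes I_{n_x}^{\otimes l})\,K^{(n_x^2,n_x^l)}$, realized as a circuit by the explicit constructions \eqref{eqn:com_mat} for the commutation matrices and \eqref{eqn:P1}--\eqref{eqn:P2m} for $P$.

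The depth bookkeeping then rests on two observations. The three operators in the product are combined by ordinary matrix multiplication and hence are applied sequentially as a circuit, so their individual depths add, accounting for the combined depths of $K^{(n_x^l,n_x)}$, $P$, and $K^{(n_x^2,n_x^l)}$. Separately, the completion gates $\overline{\rho}_{r_k}$ occupy the leading register of $Q_1$ qubits, which is disjoint from the wires carrying the middle block, so they may be layered in parallel and contribute at most $Q_1$ single-qubit Pauli-$X$ gates. Finally I would bound $Q_1$ over the range $j\in\{1,\dots,\alpha-1\}$ of the $L_{2b}^{\text{(e)}}$ sum: since $Q_1 = \log(\alpha n_t) + (\alpha-j-1)\log n_x$ is decreasing in $j$, its maximum occurs at $j=1$, giving $Q_1 \le \log(\alpha n_t n_x^{\alpha-2})$, which is precisely the stated count. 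Assembling these pieces yields the claimed depth.

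I expect the only genuine obstacle to be the careful qubit-register accounting: confirming that the tensor factors of $\overline{\mathcal{A}}$ sit on disjoint wires so that the permutation and commutation subcircuits compose in sequence (depths add) while the single-qubit completions run in parallel, and verifying that the worst case over $j$ reproduces exactly the $\log(\alpha n_t n_x^{\alpha-2})$ bound rather than the naive per-term $Q_1$. Beyond this, the argument is a direct reading of the closed forms in \eqref{eqn:ABar}, \eqref{eqn:com_mat}, and \eqref{eqn:P1}--\eqref{eqn:P2m}.
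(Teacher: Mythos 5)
Your proposal is correct and follows essentially the same route as the paper's proof: write $U_2 = I\otimes\overline{\mathcal{A}}$, read off the explicit tensor-product form of $\overline{\mathcal{A}}$ from Theorem \ref{ACompletionTheorem}, note that the $\overline{\rho}_{r_k}\in\{I,X\}$ factors contribute at most $Q_1$ Pauli-$X$ gates while the sequential product of $P$ and the two commutation matrices accounts for the remaining depth, and maximize $Q_1$ at $j=1$ to get $\log(\alpha n_t n_x^{\alpha-2})$. Your additional remarks on register disjointness and sequential composition merely make explicit what the paper leaves implicit.
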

\begin{proof}
    \noindent From the definition of $U_2$ and Theorem \ref{ACompletionTheorem} we have, 
    \begin{equation} \notag
    \begin{split}
        U_2 &= \begin{pmatrix} \overline{\mathcal{A}} & 0 \\ 0 & \overline{\mathcal{A}} \end{pmatrix}
        = \rho_4 \otimes \overline{\mathcal{A}} \\
        &= \rho_4 \otimes
        \Bigl( \bigotimes_{k=0}^{Q_1-1} \overline{\rho}_{r_k} \Bigr) \\
        &\quad\otimes 
        \Biggl(
        \Bigl( \rho_4^{\otimes \log n_x} \otimes K^{(n_x^l,n_x)} \Bigr) \\
        &\quad\cdot \Bigl(P \otimes I_{n_x}^{\otimes l} \Bigr) \cdot K^{(n_x^2,n_x^l)} \Biggr)
        \otimes I_{n_x}^{\otimes j-l-1} \,.
    \end{split}
    \end{equation}
    So the $U_2$ complexity depends upon $P$, $K^{(n_x^l,n_x)}$, and $K^{(n_x^2,n_x^l)}$. Additionally, there are $Q_1$ tensor products of $\overline{\rho}_{r_k}$ terms. From Theorem \ref{ACompletionTheorem} it follows that $Q_1=\log(\alpha n_tn_x^\alpha/n_x^{j+1})$, and since $\overline{\rho}_{r_k}\in\{I,X\}$, then there are at most $\log(\alpha n_tn_x^\alpha/n_x^{j+1})$ Pauli-X gates. This is maximized for $j=1$, so we have at most $\log(\alpha n_tn_x^{\alpha-2})$ Pauli-X gates. 
\end{proof}

Theorems \ref{ACompletionTheorem} \-- \ref{U2} demonstrate how to block encode the $L_{2b}^{\text{(e)}}$ terms with complexity that depends on the $P$, $K^{(n_x^l,n_x)}$, and $K^{(n_x^2,n_x^l)}$ matrices. In the next section, we show that the gate-depth complexities for the circuit implementations of these matrices is polylogarithmic, thereby demonstrating that $U_2$ is efficient.

%-----------------------------------------------------------------------------------
%Resource Estimation
%-----------------------------------------------------------------------------------

\section{Resource Estimation} \label{Complexity}

\begin{table*}
	\begin{center}
		\renewcommand{\arraystretch}{1.5}
		\begin{tabular}{m{0.085\linewidth}| m{0.08\linewidth} | m{0.17\linewidth} | m{0.065\linewidth} | m{0.115\linewidth} | m{0.22\linewidth}}
			\hline
			\multicolumn{6}{c}{Cost to Encode Carleman Linearized 1D Burgers' Equation} \\		
			\hline
			Term & 
			\multicolumn{2}{c|}{Clifford Count (Min, Max)} & 
			\multicolumn{2}{c|}{T Count (Min, Max)} & 
			No. of Terms (Exact) \\
			\hline\hline 
			\multirow{1}{0em}{$L_1^{\text{(e)}}$} & 
			$1$ &
			$\log n_t$ &
			$0$ &
			$\log n_t$ &			
			$\log n_t + 1$ \\
			\hline
			\multirow{1}{0em}{$L^{\text{(e)}}_{2a}$} &
			$\log \alpha$ & 
			$\log \alpha n_t n_x^\alpha$ &
			$\log \alpha$ &
			$\log \alpha n_t n_x^\alpha$ &
			$\alpha(\alpha+1)(2\log n_x +3)$ \\
			\hline
			$L^{\text{(e)}}_{2b}$ &
			$\log \alpha n_x$ &
			$\alpha(\log n_x)^2$ &
			$\log \alpha n_x$&
			$(\log n_x)^2$ &
			$2\alpha(\alpha-1)$ \\
			\hline
		\end{tabular}
	\end{center}
	\caption{A resource estimate of the Clifford and T-gate counts, and number of terms for the zero padded Carleman linearized 1D Burgers' equation from \eqref{eqn:LeYeBe}. The $L_1^{\text{(e)}}$ term comes from \eqref{eqn:L1e}, the $L_{2a}^{\text{(e)}}$ from \eqref{eqn:L2a}, and the $L_{2b}^{\text{(e)}}$ from \eqref{eqn:L2b}. Since each term is split into many terms of varying cost, we provide the minimum and maximum Clifford and T-gate counts for each to provide a range. The Clifford and T-gate count columns use big O notation, though the $\mathcal{O}(\cdot)$ has been dropped for convenience. The Number of Terms column is an exact count.}
	\label{tbl: Res Est No QLSA}
\end{table*}

There are three types quantities relevant for resource estimation \cite{Scholten2024}: (1) the number of qubits, (2) the number of terms in the linear combination of $L^{\text{(e)}}$ ($N_s$ from \eqref{eqn:LCNU}), and (3) the Clifford (composed of Hadamard, phase, and CNOT gates) and T (aka $Z^{1/4}$) gate counts required to load each matrix in the decomposition. The number of qubits required is exactly $\log (\alpha n_t n_x^\alpha) + 1$ where the additional qubit is an ancillary qubit required for the block encoding strategy outlined in Section \ref{BlockEncoding}. The latter two estimates are summarized in Table \ref{tbl: Res Est No QLSA} and described in more detail below. 

The total number of terms can also be exactly calculated. From \eqref{eqn:Le} and \eqref{eqn:L2aL2b}, $L^{\text{(e)}}$ is split into the $L_1^{\text{(e)}}$, $L_{2a}^{\text{(e)}}$ and $L_{2b}^{\text{(e)}}$ terms. Conveniently, $L_1^{\text{(e)}}$ is decomposed into exactly $\log n_t+1$ terms as shown in \eqref{eqn:L1e}. By combining \labelcref{eqn:L2a,eqn:Aejj,eqn:Ajj,eqn:A11} we can see that $L_{2a}^{\text{(e)}}$ has $(4\log n_x+6)\sum_{j=1}^\alpha j= \alpha(\alpha+1)(2\log n_x+3)$ terms. Similarly, by combining \labelcref{eqn:L2b,eqn:Aejp1j,eqn:A21} we can see that $L_{2b}^{\text{(e)}}$ has $4\sum_{j=1}^{\alpha-1} j= 2\alpha(\alpha-1)$ terms. By adding all three contributions together, the total number of terms in the decomposition of $L^{\text{(e)}}$ is exactly $\log n_t + 2\alpha(\alpha+1)\log n_x + \alpha(5\alpha+1) + 1$, which we say has complexity $\mathcal{O}(\log n_t + \alpha^2\log n_x)$ number of terms assuming that $n_x = n_t$ and $2 \le \alpha \ll n_x$.

Finally, to obtain an estimate of the Clifford and T gate counts, we require the following facts and assumptions (referred to as items):
\begin{enumerate}
	\item A $C^qX$ gate can be implemented with $\mathcal{O}(q)$ Clifford and $\mathcal{O}(q)$ T-gates using one dirty ancilla \cite{MultiControl}. \label{assum: CqX}
	\item An adjacent Toffoli gate is implemented with a constant number of Clifford (CNOT, H, Pauli or S) and T-gate \cite{Cruz2024}. \label{assum:Tof}
	\item An adjacent SWAP gate is implemented with three CNOT gates \cite{Hardy2006}. \label{assum:SWAP}
	\item Assume that $n_t=n_x$. 
\end{enumerate}

Here, we estimate the Clifford and T gate count for an arbitrary term in our decomposition $L_l$, by estimating the resources required for its block encoding $U_l=U_{l,1}U_{l,2}$ as described in Section \ref{BlockEncoding}. For the remainder of this section we will drop the $l$ subscript and refer to these terms as $L$, $U$, $U_1$ and $U_2$. Since $U_1$ can be implemented with a single $C^qX$ gate, as described in Section \ref{BlockEncoding}, its gate complexity scales like $\mathcal{O}(q)$ from item \ref{assum: CqX}. The procedure to determine the value $q$ is simply to count the number of operators from the set $\mathbb{P} \setminus \rho_4$ that appear in the term $LL^T$ (see \cite{GS24,GS2025_LCofThings} for a detailed explanation). Similarly, the procedure to determine $U_2$ is to obtain $\overline{L}$ and to use the relation $U_2 = I \otimes \overline{L}$. Bringing these ideas together, as an example consider the term $L = \rho_0 \otimes \rho_1 \otimes \rho_2 \otimes \rho_3 \otimes \rho_4 \otimes P$ for an arbitrary permutation matrix $P$. From this we find that $LL^T = \rho_0 \otimes \rho_0 \otimes \rho_3 \otimes \rho_3 \otimes \rho_4 \otimes \rho_4$ and $\overline{L} = \rho_4 \otimes \sigma_0 \otimes \sigma_0 \otimes \rho_4 \otimes P$. Therefore, the associated $U_1$ matrix requires one $C^4X$ gate and the associated $U_2$ circuit requires two NOT gates and the circuit for $P$.

This procedure, in conjunction with the items listed above, is applied to estimate the Clifford and T gate counts for each of the $L_1^{\text{(e)}}$, $L_{2a}^{\text{(e)}}$ and $L_{2b}^{\text{(e)}}$ terms as summarized in Table \ref{tbl: Res Est No QLSA}. To provide a concrete example about how these estimates were formed, we consider the most expensive $L_{2b}^{\text{(e)}}$ term. This term comes about by evaluating \labelcref{eqn:Aejp1j,eqn:A21} into \eqref{eqn:L2b} and choosing $j=\alpha-1$, $l=\alpha-2$, the $\rho_0^{\otimes \log n_t}$ term from \eqref{eqn:L2b}, and the $F^+$ component from \eqref{eqn:A21}. The resulting $L$, $\overline{L}$ and $LL^T$ terms are provided in the box of Figure \ref{fig:L2b Circs}. The $U_1$ term requires a single $C^qX$ gate where $q = \log \alpha n_t n_x$ since $LL^T$ has as many operators from the set $\mathbb{P} \setminus \rho_4$. Next, by looking at the $\overline{L}$ term it follows that the $U_2$ circuit requires a single NOT gate plus the circuits for $K^{(n_x^{\alpha-2},n_x)}$, $P_2^+$, $P_1$ and $K^{(n_x^2,n_x^{\alpha-2})}$. By invoking items \cref{assum: CqX,assum:Tof,assum:SWAP} and from \labelcref{eqn:P1,eqn:P2p,eqn:com_mat}, these circuits require $\mathcal{O}\left( \alpha(\log n_x)^2 \right)$ Clifford gates and $\mathcal{O}\left( (\log n_x)^2 \right)$ T gates, as appear in Table \ref{tbl: Res Est No QLSA}.

%-----------------------------------------------------------------------------------
%Numerical Simulations
%-----------------------------------------------------------------------------------

\section{Numerical Simulations} \label{NumSims}

\begin{figure}
	\centering
	\begin{subfigure}{0.48\textwidth}
		\centering
		\includegraphics[]{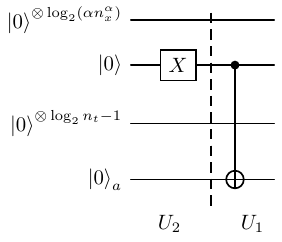}
		\vspace{-1em}
		\begin{equation*}
			\tcbhighmath{
				\begin{split}
					L &= I_{n_t/2} \otimes \rho_2 \otimes I_{\alpha n_x^\alpha} \\
					\overline{L} &= I_{n_t/2} \otimes \sigma_x \otimes I_{\alpha n_x^\alpha}\\
					L L^T &= I_{n_t/2} \otimes \rho_3 \otimes I_{\alpha n_x^\alpha} 
			\end{split}	}
		\end{equation*}
	\end{subfigure}
	
	\vspace{0.5cm}
	
	\begin{subfigure}{0.48\textwidth}
		\centering
		\includegraphics[]{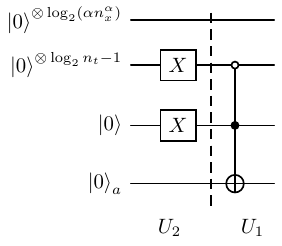}
		\vspace{-1em}
		\begin{equation*}
			\tcbhighmath{
				\begin{split}
					L &= \rho_2 \otimes \rho_1^{\otimes \log n_t -1} \otimes I_{\alpha n_x^\alpha} \\
					\overline{L} &=  \sigma_x^{\otimes \log n_t} \otimes I_{\alpha n_x^\alpha} \\
					L L^T &= \rho_3 \otimes \rho_0^{\otimes \log n_t -1} \otimes I_{\alpha n_x^\alpha}
			\end{split}	}
		\end{equation*}
	\end{subfigure}
	\caption{Circuits for the two nontrivial block encoded $L_1^{\text{(e)}}$ terms from \eqref{eqn:L1e} are given. The $\ket{0}_a$ wire is a single ancillary qubit required for the block encoding described in Section \ref{BlockEncoding}. A control operation or a single qubit gate on a multi-qubit register should be interpreted as the respective operation being applied to each individual wire within that register. The vertical dashed line separates the $U_1$ from the $U_2$ component corresponding to the block encoding discussed in Section \ref{BlockEncoding}. The equations below the circuits are: the term to block encode ($L$), its completion ($\overline{L}$) which is necessary to construct $U_2=I \otimes \overline{L}$, and $LL^T$ which is used to construct $U_1$ using the approach outlined in \cite{GS24}. The bottom circuit is formed by choosing $j=2$ in the summation. Circuits drawn using \cite{Kay2018}.}
	\label{fig:L1e Circs}
\end{figure}

In this section we implement our decomposition strategy to (i) generate circuits that load the Carleman linearized 1D Burgers' equation, and (ii) use these circuits to obtain a solution to the linear system using the VQLS method. We start with the nonlinear PDE provided in \eqref{eqn:discreteBurgers} with the following parameter settings: $n_t=4$, $n_x=4$, $\Delta t=0.25 \text{ s}$, $\Delta x = 2\pi /(n_x-1) \text{ m}$, $\nu=1.0 \text{ m}^2 \text{s}^{-1}$, and a Gaussian distribution for the initial condition given by $u(x,0) = (\sqrt{2\pi} \sigma^2)^{-1} e^{-(x - \mu)^2/(2\sigma^2)}$ with $\sigma=0.5$ and $\mu=\pi$. The Carleman linearization is then applied with a truncation order of $\alpha=2$ resulting in an equation of the form \eqref{eqn:LeYeBe} with $L^{\text{(e)}} \in \mathbb{C}^{128 \times 128}$. 

Next, we apply the decomposition strategy described in Section \ref{MatrixDecomposition}. Here, we find that $L^{\text{(e)}}$ may be decomposed into a linear combination of $73$ non-unitary matrices that are then individually block encoded using the strategy described in Section \ref{BlockEncoding}. For a point of reference, a decomposition of the same matrix using a linear combination of tensor products of Pauli matrices requires $1,142$ terms. Using our approach, there are three types of circuits stemming from the terms $L^{\text{(e)}}_1$, $L^{\text{(e)}}_{2a}$ and $L^{\text{(e)}}_{2b}$, which are ~\labelcref{eqn:L1e,eqn:L2a,eqn:L2b} respectively. A characteristic example for each of these three types of circuits are given in Figures \labelcref{fig:L1e Circs,fig:L2a Circs,fig:L2b Circs}, respectively, demonstrating that our method provides complete constructions. Circuits in the forms of Figures \labelcref{fig:L1e Circs,fig:L2a Circs,fig:L2b Circs} are used to generate all of the $73$ aforementioned terms in the linear combination of $L^{\text{(e)}}$. 

\begin{figure}
	\centering

	\includegraphics[]{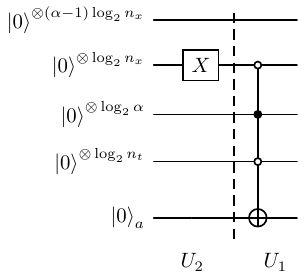}
	\vspace{-1em}
	\begin{equation*}
		\tcbhighmath{
			\begin{split}
				L &= \rho_0^{\otimes \log n_t} \otimes \rho_3^{\otimes \log \alpha} \otimes \rho_1^{\otimes \log n_x} \otimes I_{n_x^{\alpha-1}} \\
				\overline{L} &= I_{\alpha n_t} \otimes \sigma_x^{\otimes \log n_x} \otimes I_{n_x^{\alpha-1}} \\
				L L^T &= \rho_0^{\otimes \log n_t} \otimes \rho_3^{\otimes \log \alpha} \otimes \rho_0^{\otimes \log n_x} \otimes I_{n_x^{\alpha-1}}  
		\end{split}	}
	\end{equation*}
	
	\caption{Same as Figure \ref{fig:L1e Circs}, except for the $L^{\text{(e)}}_{2a}$ term where we have evaluated both \labelcref{eqn:Ajj,eqn:A11} into \eqref{eqn:L2a} and chosen $j=\alpha$, $l=0$, the $\rho_0^{\otimes \log n_t}$ term from \eqref{eqn:L2a}, and the $\rho_1^{\otimes s}$ term ($s=\log n_x$) from \eqref{eqn:A11}. The circuit will take a similar form for other values of $j$, $l$ and if other terms from \eqref{eqn:A11} are used. Note that the circuit has the same width as those from Figure \ref{fig:L1e Circs}.}
	\label{fig:L2a Circs}
\end{figure}

\begin{figure*}
	\centering
	\includegraphics[]{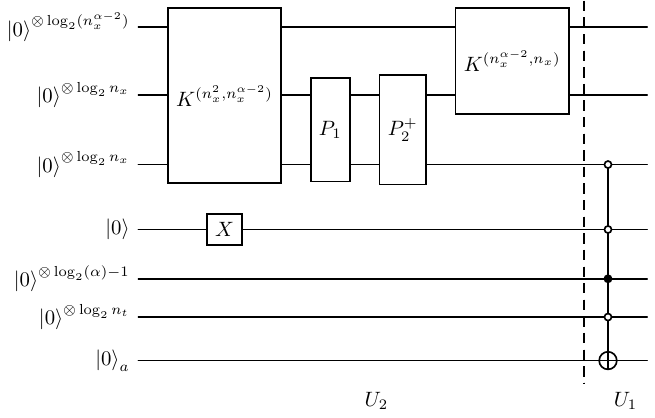}
	\vspace{-1em}
	\begin{equation*}
	\tcbhighmath{
		\begin{split}
			L &= \rho_0^{\otimes \log n_t} \otimes \rho_3^{\otimes \log(\alpha)-1} \otimes \rho_1 
			\otimes \\
			&\qquad 
			\left[ \left( \rho_0^{\otimes \log n_x} \otimes K^{(n_x^{\alpha-2},n_x)} \right) 
			\cdot \left(
			\left( \rho_0^{\otimes \log n_x} \otimes I_{n_x} \right)
			P_2^+ P_1 \otimes I_{n_x^{\alpha-2}} \right) 
			\cdot K^{(n_x^2,n_x^{\alpha-2})} \right] \\[5pt]
			\overline{L} &=  I_{\alpha n_t/2} \otimes \sigma_x 
			\otimes \left[ \left( I_{n_x} \otimes K^{(n_x^{\alpha-2},n_x)} \right) 
			\cdot \left( P_2^+ P_1 \otimes I_{n_x^{\alpha-2}} \right) 
			\cdot K^{(n_x^2,n_x^{\alpha-2})} \right] \\[5pt]
			L L^T &= \rho_0^{\otimes \log n_t} \otimes \rho_3^{\otimes \log(\alpha)-1} 
			\otimes \rho_0^{\otimes \log(n_x)+1} \otimes I_{n_x^{\alpha-1}}
		\end{split}	}
	\end{equation*}
	\caption{Same as Figure \ref{fig:L1e Circs}, except for the $L^{\text{(e)}}_{2b}$ term where we have evaluated  \labelcref{eqn:Aejp1j,eqn:A21} into \eqref{eqn:L2b} and chosen $j=\alpha-1$, $l=\alpha-2$, the $\rho_0^{\otimes \log n_t}$ term from \eqref{eqn:L2b}, and the $F^+$ component from \eqref{eqn:A21}. The circuit will take a similar form for other values of $j$, $l$ and if the $F^-$ term is used instead of $F^+$. The circuits for $P_1$ and $P_2^+$ are given in \labelcref{eqn:P1,eqn:P2p} respectively, and the circuits for the commutation matrices $K^{(a,b)}$ are given in \eqref{eqn:com_mat}. Note that the circuit has the same width as those from Figures \labelcref{fig:L1e Circs,fig:L2a Circs}.}
	\label{fig:L2b Circs}
\end{figure*}

Next, we combine our approach with the VQLS algorithm from \cite{VQLS} to obtain a solution to the linear system in \eqref{eqn:LeYeBe} for this specific setup. The VQLS method is a variational approach whereby an ansatz ($V(\vec{\theta})$) with variational parameters ($\vec{\theta}$) is used to approximate the true solution ($\vec{Y}^{\text{(e)}}$) by searching the parameter space to find the optimal parameters ($\vec{\theta}_\text{opt}$) such that $V(\vec{\theta}_\text{opt}) \ket{0} \approx \vec{Y}^{\text{(e)}}$. To do this, the quantum computer is tasked with calculating the classically intractable expectation values that are then passed to a classical computer to (i) compute a cost function, and (ii) update the variational parameters using an optimization routine. Given the updated variational parameters, the quantum computer recomputes the expectation values and again passes the information back to the classical computer to repeat (i) and (ii). This back-and-forth process is repeated until the cost function reaches a predefined minimum at which point $V(\vec{\theta}_\text{opt})$ is obtained. 

To implement the VQLS routine one requires a cost function, ansatz, and an optimization routine. First, to avoid the barren plateau phenomena \cite{Larocca2025}, we have elected to use the local cost function outlined in \cite{VQLS}, which, if we consider the decomposition from \eqref{eqn:LCNU}, is written as
\begin{equation*}
	\begin{split}
		C(\vec{\theta}) &= \frac{1}{2} \left(1 - \frac{1}{n} 
		\frac{\sum_{k=0}^{n-1}\sum_{l,l^\prime=0}^{N_s-1} c_l c_{l^\prime}^* \delta_{l,l^\prime}^k}
		{\sum_{l,l^\prime=0}^{N_s-1} c_l c_{l^\prime}^* \beta_{l,l^\prime}} \right) \\
		\delta_{l,l^\prime}^k &= \bra{V(\vec{\theta})} L_{l^\prime}^\dagger U_b Z_k U_b^\dagger L_l \ket{V(\vec{\theta})} \\
		\beta_{l,l^\prime} &= \bra{V(\vec{\theta})} L_{l^\prime}^\dagger L_l \ket{V(\vec{\theta})} \,,
	\end{split}
\end{equation*}
where $n$ is the number of qubits, $N_s=73$ for the parameters selected in this section, the coefficients $c_l$ come from \eqref{eqn:LCNU}, $U_b\ket{0}=\vec{B}^{\text{(e)}}$ loads the RHS of \eqref{eqn:LeYeBe}, and $Z_k$ is the Pauli-Z matrix applied on the $k^\text{th}$ qubit. Next, for our ansatz we use Circuit 18 from \cite{Sim2019} and reproduced in Figure \ref{fig:Ansatz_Sim18}, which is hardware efficient with decent expressibility and entangling capabilities. Since the dimension of this test problem is small, we need only three ansatz layers yielding $3 n=21$ total variational parameters for number of qubits $n=\log(\alpha n_t n_x^\alpha)$. Finally, we use the conjugate gradient optimizer available in IBM \texttt{Qiskit's} software package \texttt{qiskit\_algorithms.optimizers} with a tolerance setting of $10^{-3}$. The conjugate gradient optimizer was used because it was found to perform the best among twelve commonly used optimizers for idealized (noiseless) settings \cite{Singh2023} such as the one used here. It is important to note that there are better choices for noisy settings. 

\begin{figure*}
	\centering
	\includegraphics{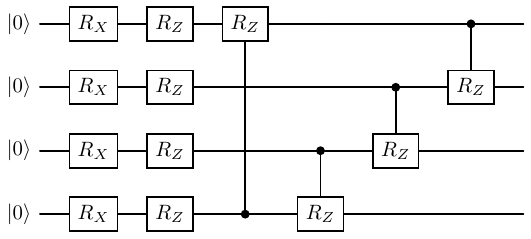}
	\caption{A single layer of the ansatz used for the VQLS routine taken from Circuit 18 from \cite{Sim2019}.} 
	\label{fig:Ansatz_Sim18}
\end{figure*}

Bringing everything together, the VQLS solution to the Carleman linearized 1D Burgers' equation using our novel decomposition strategy is shown in Figure \ref{fig:Burgers_Solution}. For validation, we compare our quantum solution to a classical solution whereby we have solved exactly the same linear system, but have instead used the \texttt{python} function \texttt{numpy.linalg.solve} to obtain a solution. A visual comparison shows that the quantum solution validates well to the classical solution, thereby demonstrating that our decomposition strategy is capable of producing accurate results. 

\begin{figure*}
	\centering
	\includegraphics[scale=0.5]{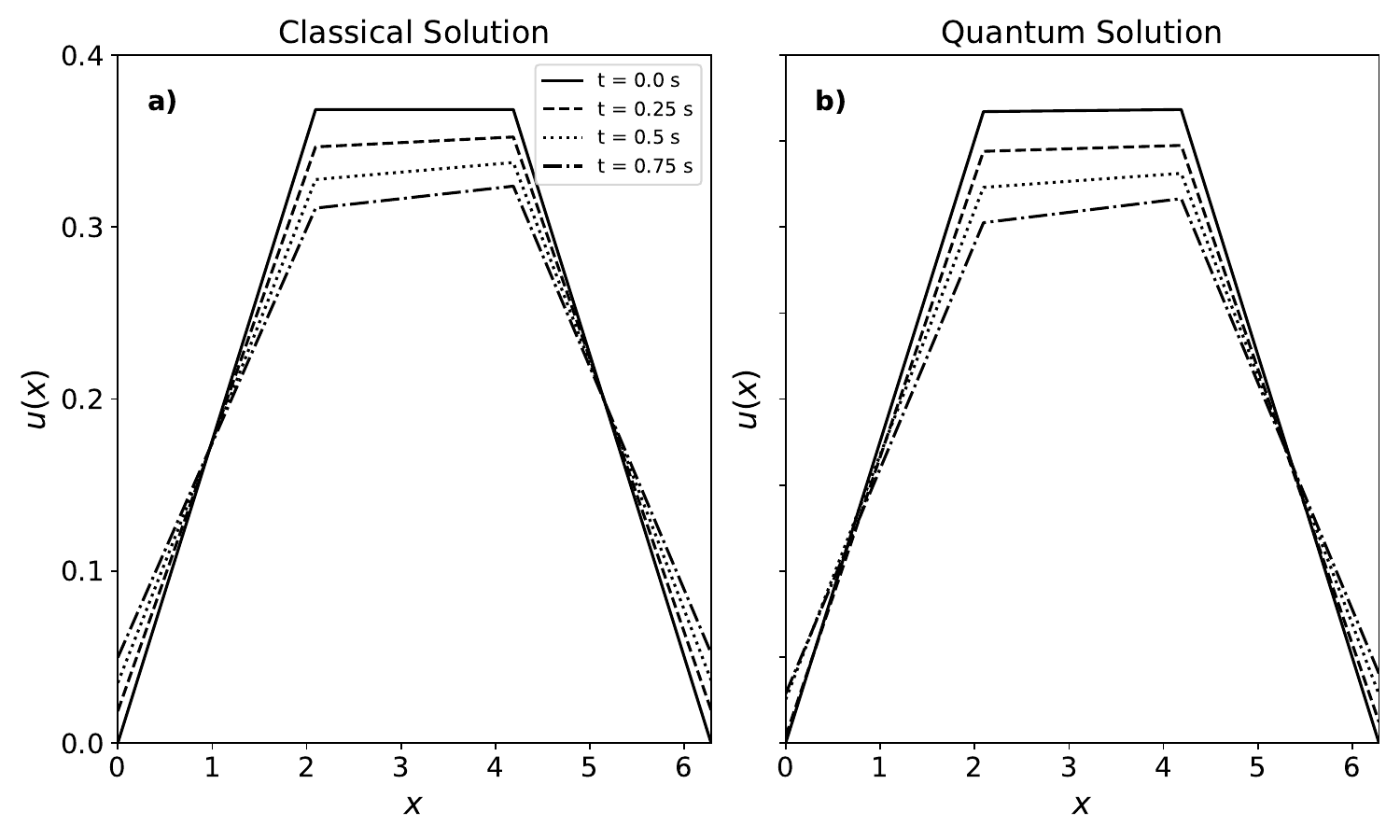}
	\caption{Solutions to the Carleman linearized 1D Burgers equation from \eqref{eqn:LeYeBe} using (a) a classical linear systems solver, and (b) our decomposition approach combined with VQLS \cite{VQLS}. The initial condition is the solid black line and the evolution at integer multiples of $0.25 \text{ s}$ is provided in the dashed and dotted lines.} 
	\label{fig:Burgers_Solution}
\end{figure*}

%-----------------------------------------------------------------------------------
%Discussion and Conclusions
%-----------------------------------------------------------------------------------

\section{Discussion and Conclusions} \label{Conclusions}
In this work, we solve the decomposition problem for the 1D Carleman linearized Burgers' equation. The key insights introduced in this study are two-fold: (1) to embed the original Carleman system into an even larger system of equations, and (2) to extend the methods introduced in \cite{GS24} to include products of elements from $\mathbb{P}$ with specific unitary matrices. The advantage gained from these insights is that the larger system can be decomposed into a linear combination of $\mathcal{O}(\log n_t + \alpha^2\log n_x)$ terms, whereas the original has no known polylogarithmic decomposition. While these terms are non-unitary, they can be efficiently block encoded into unitary matrices, and therefore used in a QLSA. As an example, we use the VQLS where we found that the upper bounds for the Clifford and T gate counts are $\mathcal{O}(\alpha(\log n_x)^2)$ and $\mathcal{O}((\log n_x)^2)$ respectively. Together, these polylogarithmic scalings suggest that it may be possible to exponentially increase the spatial and temporal grid sizes in CFD and NWP models. That being said, whether an exponential increase is possible is still an open question and there are still major challenges that must be solved. 

One such challenge with the Carleman linearized Burgers' (or Navier-Stokes) equation is that, for strongly nonlinear interactions, it may not be possible to efficiently find accurate solutions, as put forth by \cite{Liu21}. However, this may be a case of learning through experiment since their empirical results do differ from their analytical results. One way to completely circumvent the strong nonlinearity issue is to apply the Carleman linearization method to the Lattice-Boltzman equation (LBE) rather than the Navier-Stokes \cite{li2025potential}. The advantage being that the LBE is inherently weakly nonlinear provided that the Mach number is small. It is therefore important to note that while the present study focused on the 1D Burgers' equation, the zero padding method introduced here can also be applied to the LBE. In fact, a related embedding technique was introduced in the encoding oracles in \cite{penuel2024feasibility}. The work presented here therefore fits well into the quantum algorithm literature by contributing a generalizable method useful for different approaches.

Another such challenge is related to the data readout problem, whereby quantum advantage may be lost when extracting exponentially small probability amplitudes from the state vector. In Liu \textit{et al.}\! \cite{Liu21}, they find that for dissipative systems ``we must have a suitable initial condition and terminal time such that the final state is not exponentially smaller than the initial state." The reasoning comes about because if the final solution is exponentially smaller than the initial state, then reading the probability amplitudes of the final state could take an exponential number of final state preparations. If instead we consider a dissipative system with a forcing term, then the final state is not necessarily exponentially smaller than the initial state, and therefore readout can be efficient. Given that, quantum advantage for the homogeneous case may not be possible if the terminal time $t_\text{final}$ is too large such that the amplitude of the solution at $t_\text{final}$ is exponentially smaller than that of the initial condition. Therefore, $t_\text{final}$ will depend upon both the initial state and the diffusion coefficient $\nu$, and must be chosen carefully to ensure that the final solution is not exponentially smaller than the initial state.

Finally, an important limitation of this work is that we make no effort to transpile our circuits since that is a device specific process. While our decompositions do achieve the desirable polylogarithmic circuit depth complexity, we implicitly assume an all-to-all connectivity for the topology. A different topology will introduce more overhead and, since the overhead is device specific, we cannot make general remarks on how this will impact our circuit depth complexities. That being said, the circuit depths reported here are a starting point and are certainly not optimal. In fact, there are known improvements to at least two of the circuits used, that are the incrementer \cite{Incrementer} and the decomposition of the $C^jX$ gates \cite{MultiControl}. So, while the transpilation of our circuits onto real hardware will incur some overhead, there is also reason to believe that we can even potentially improve upon the circuit depths reported here.

%-----------------------------------------------------------------------------------
%Acknowledgements
%-----------------------------------------------------------------------------------

\section*{Acknowledgements}
We gratefully acknowledge the support NRL Base Program PE 0601153N. Furthermore, we are thankful for the stimulating discussions with Dr. Arzhang Angoshtari.

%-----------------------------------------------------------------------------------
%Appendix
%-----------------------------------------------------------------------------------

\onecolumngrid
\appendix

%-----------------------------------------------------------------------------------
%Appendix: Example mappings
%-----------------------------------------------------------------------------------

\section{Example of the Quaternary Mappings} \label{QuaternaryMapping}
Table \ref{table:QuatMapTable} lists some arbitrary examples of the quaternary mapping method introduced in Section \ref{CarlemanEmbedding}. The purpose of these terms is to place the $A_{j}^{\text{(e)},j}$ and $A_{j+1}^{\text{(e)},j}$ terms in \eqref{eqn:Ae} in their appropriate positions along the diagonal and super-diagonal respectively.

\begin{table}[h]
\begin{center}
\renewcommand{\arraystretch}{1.5}
\begin{tabular}{m{0.025\linewidth}| m{0.025\linewidth} | m{0.025\linewidth} | m{0.05\linewidth} | m{0.05\linewidth} | m{0.15\linewidth} | m{0.15\linewidth}}
 \hline
 $\alpha$ & $i$ & $j$ & $b_\alpha(i)$ & $b_\alpha(j)$ & $f(b_\alpha(i),b_\alpha(j))$ & $\rho_{f_{K-1}\dots f_0}$ \\
 \hline \hline
 $2$ & $0$ & $0$ & $0$ & $0$ & $0$ & $\rho_0$ \\
 \hline
 $2$ & $0$ & $1$ & $0$ & $1$ & $1$ & $\rho_1$ \\
 \hline 
 $4$ & $0$ & $1$ & $00$ & $01$ & $01$ & $\rho_0\otimes\rho_1$ \\
 \hline
 $4$ & $2$ & $3$ & $10$ & $11$ & $31$ & $\rho_3\otimes\rho_1$ \\
 \hline
 $8$ & $1$ & $5$ & $001$ & $101$ & $103$ & $\rho_1\otimes\rho_0\otimes\rho_3$ \\
 \hline  
 $8$ & $6$ & $7$ & $110$ & $111$ & $331$ & $\rho_3\otimes\rho_3\otimes\rho_1$ \\
 \hline
\end{tabular}
\end{center}
\caption{Some arbitrary examples for the quaternary mapping method discussed in Section \ref{CarlemanEmbedding}. Here, $\alpha$ is the truncation order, $i,j$ are matrix element indices, $b_\alpha(k)$ is the decimal to binary mapping function of bitstring length $\log\alpha$, $f(b_\alpha(i),b_\alpha(j))$ maps the binary values to their respective quaternary values, and $\rho_{f_{K-1}\dots f_0}$ are the full products using the quaternary bitstrings.}
\label{table:QuatMapTable}
\end{table}

%-----------------------------------------------------------------------------------
%Appendix: Derivation for $A_{j+1}^{(e),j}
%-----------------------------------------------------------------------------------

\section{Derivation for $A_{j+1}^{\textnormal{(e)},j}$ \eqref{eqn:Aejp1j}} \label{DerivationAejp1j}
Here, we derive the $A_{j+1}^{\text{(e)},j}$ equation for $j=\{1,\dots,\alpha-1\}$. First, we expand \eqref{eqn:Ajp1j} using the property $A\otimes B = K^{(r,m)}\cdot(B\otimes A)\cdot K^{(n,q)}$ where $A\in\mathbb{C}^{r\times q}$, $B\in\mathbb{C}^{m\times n}$,  and $K^{(a,b)}\in\mathbb{C}^{ab\times ab}$ is the commutation matrix \cite{wiki:Commutation_matrix, Watrous2018}. Using the definition of $A_{j+1}^j$ from \eqref{eqn:Ajp1j}, this gives 
\begin{equation} \label{eqn:Ajp1j_com}
\begin{split}
    A_{j+1}^j &= \sum_{l=0}^{j-1} I_{n_x}^{\otimes l} \otimes F_2 \otimes I_{n_x}^{\otimes j-l-1} \\
    &= \sum_{l=0}^{j-1} \Bigl( K^{(n_x^l,n_x)} \cdot (F_2 \otimes I_{n_x}^{\otimes l}) \cdot K^{(n_x^2,n_x^l)} \Bigr) 
    \otimes I_{n_x}^{\otimes j-l-1} \,.
\end{split}
\end{equation}
Next, we evaluate \eqref{eqn:Ajp1j_com} into \eqref{eqn:Aejp1j} to obtain
\begin{equation} \label{eqn:Aejp1jDerivation}
\begin{split}
    A_{j+1}^{\text{(e)},j} &\coloneq 
    \begin{pmatrix}
        A_{j+1}^j & 0_{n_x^j \times (n_x^\alpha-n_x^{j+1})} \\
        0_{(n_x^\alpha-n_x^j)\times n_x^{j+1}} & 0_{(n_x^\alpha-n_x^j)\times(n_x^{\alpha}-n_x^{j+1})}
    \end{pmatrix} \\[4pt]
    &=
    \begin{pmatrix}
        \sum_{l=0}^{j-1} \Bigl( K^{(n_x^l,n_x)} \cdot (F_2 \otimes I_{n_x}^{\otimes l}) \cdot K^{(n_x^2,n_x^l)} \Bigr) \otimes I_{n_x}^{\otimes j-l-1} 
        & 0_{n_x^j \times (n_x^\alpha-n_x^{j+1})} \\
        0_{(n_x^\alpha-n_x^j)\times n_x^{j+1}} & 0_{(n_x^\alpha-n_x^j)\times(n_x^{\alpha}-n_x^{j+1})}
    \end{pmatrix} \\[4pt]
    &=
    \sum_{l=0}^{j-1}
    \begin{pmatrix}
        \Bigl( K^{(n_x^l,n_x)} \cdot (F_2 \otimes I_{n_x}^{\otimes l}) \cdot K^{(n_x^2,n_x^l)} \Bigr) \otimes I_{n_x}^{\otimes j-l-1} 
        & 0_{n_x^j \times (n_x^\alpha-n_x^{j+1})} \\
        0_{(n_x^\alpha-n_x^j)\times n_x^{j+1}} & 0_{(n_x^\alpha-n_x^j)\times(n_x^{\alpha}-n_x^{j+1})}
    \end{pmatrix} \\[4pt]
    &=
    \sum_{l=0}^{j-1}
    \begin{pmatrix}
        \Bigl( K^{(n_x^l,n_x)} \cdot (F_2 \otimes I_{n_x}^{\otimes l}) \cdot K^{(n_x^2,n_x^l)} \Bigr) 
        & 0_{n_x^{l+1}\times(n_x^{\alpha-j+l+1}-n_x^{l+2})} \\
        0_{(n_x^{\alpha-j+l+1}-n_x^{l+1})\times n_x^{l+2}} & 0_{(n_x^{\alpha-j+l+1}-n_x^{l+1})\times (n_x^{\alpha-j+l+1}-n_x^{l+2})}
    \end{pmatrix} 
    \otimes I_{n_x}^{\otimes j-l-1} \\[4pt]
    &=
    \rho_0^{\otimes\log(n_x^{\alpha-j-1})} \otimes
    \sum_{l=0}^{j-1}
    \begin{pmatrix}
        K^{(n_x^l,n_x)} \cdot (F_2 \otimes I_{n_x}^{\otimes l}) \cdot K^{(n_x^2,n_x^l)} \\
        0_{(n_x^{l+2}-n_x^{l+1})\times n_x^{l+2}}
    \end{pmatrix} 
    \otimes I_{n_x}^{\otimes j-l-1} \,.
\end{split}
\end{equation}

\noindent Next, we simplify the matrix product terms by
\begin{equation} \label{eqn:MatProd}
\begin{split}
    &\begin{pmatrix}
        K^{(n_x^l,n_x)} \cdot (F_2 \otimes I_{n_x}^{\otimes l}) \cdot K^{(n_x^2,n_x^l)} \\
        0_{(n_x^{l+2}-n_x^{l+1})\times n_x^{l+2}}
    \end{pmatrix}
    =
    \begin{pmatrix}
        K^{(n_x^l,n_x)} \cdot (F_2 \otimes I_{n_x}^{\otimes l}) \\
        0_{(n_x^{l+2}-n_x^{l+1})\times n_x^{l+2}}
    \end{pmatrix} 
    \cdot K^{(n_x^2,n_x^l)} \\[4pt]
    &\quad=
    \begin{pmatrix}
        K^{(n_x^l,n_x)} & 0_{n_x^{l+1}\times(n_x^{l+2}-n_x^{l+1})} \\
        0_{(n_x^{l+2}-n_x^{l+1})\times n_x^{l+1}} & 0_{(n_x^{l+2}-n_x^{l+1})\times (n_x^{l+2}-n_x^{l+1})}
    \end{pmatrix}
    \cdot 
     \begin{pmatrix}
        F_2 \otimes I_{n_x}^{\otimes l} \\
        0_{(n_x^{l+2}-n_x^{l+1})\times n_x^{l+2}}
    \end{pmatrix} 
    \cdot K^{(n_x^2,n_x^l)} \\[4pt]
    &\quad=
    \Bigl( \rho_0^{\otimes \log n_x} \otimes K^{(n_x^l,n_x)} \Bigr) \cdot 
    \Biggl(
     \begin{pmatrix}
        F_2 \\
        0_{(n_x^2-n_x)\times n_x^2}
    \end{pmatrix} 
    \otimes I_{n_x}^{\otimes l} 
    \Biggr)    
    \cdot K^{(n_x^2,n_x^l)} \,.
\end{split}
\end{equation}

\noindent Finally, evaluate \eqref{eqn:MatProd} into \eqref{eqn:Aejp1jDerivation} to give the full expression
\begin{equation} \notag
\begin{split}
    A_{j+1}^{\text{(e)},j} &= \rho_0^{\otimes\log(n_x^{\alpha-j-1})} \\
    &\qquad\otimes
    \sum_{l=0}^{j-1}
    \Biggl[
    \Bigl( \rho_0^{\otimes \log n_x} \otimes K^{(n_x^l,n_x)} \Bigr) \cdot 
    \Biggl(
     \begin{pmatrix}
        F_2 \\
        0_{(n_x^2-n_x)\times n_x^2}
    \end{pmatrix} 
    \otimes I_{n_x}^{\otimes l} 
    \Biggr)
    \cdot K^{(n_x^2,n_x^l)} \Biggr]
    \otimes I_{n_x}^{\otimes j-l-1} \,.
\end{split}
\end{equation}

%----------------------------------------------------------------------------------------------------
%Appendix: Example Case: $n_x=4$
%----------------------------------------------------------------------------------------------------

\section{Example Case: $n_x=4$} \label{Example_n4}
For $n_x=4$ we have 
\begin{subequations}
\label{equations}
\begin{align}
    \label{eqn:A21p}
    F_2^+ &=
    \begin{pmatrix}
        0 & 1 & 0 & 0  & 0  & 0 & 0 & 0 & 0 & 0  & 0 & 0 & 0 & 0 & 0  & 0 \\
        0 & 0 & 0 & 0  & 0  & 0 & 1 & 0 & 0 & 0  & 0 & 0 & 0 & 0 & 0  & 0 \\
        0 & 0 & 0 & 0  & 0  & 0 & 0 & 0 & 0 & 0  & 0 & 1 & 0 & 0 & 0  & 0 \\
        0 & 0 & 0 & 0  & 0  & 0 & 0 & 0 & 0 & 0  & 0 & 0 & 1 & 0 & 0  & 0 \\    
    \end{pmatrix}, \\[4pt]
    \label{eqn:A21m}
    F_2^- &=
    \begin{pmatrix}
        0 & 0 & 0 & 1 & 0  & 0 & 0 & 0 & 0 & 0  & 0 & 0 & 0 & 0 & 0  & 0 \\
        0 & 0 & 0 & 0  & 1 & 0 & 0 & 0 & 0 & 0  & 0 & 0 & 0 & 0 & 0  & 0 \\
        0 & 0 & 0 & 0  & 0  & 0 & 0 & 0 & 0 & 1 & 0 & 0 & 0 & 0 & 0  & 0 \\
        0 & 0 & 0 & 0  & 0  & 0 & 0 & 0 & 0 & 0  & 0 & 0 & 0 & 0 & 1 & 0 \\    
    \end{pmatrix} \,,
\end{align}
\end{subequations}
where $F_2=-(F_2^+-F_2^-)/(2\Delta x)$. Therefore, the full matrix is given by

\begin{equation} \notag
    \begin{pmatrix} F_2 \\ 0_{(n_x^2-n_x) \times n_x^2} \end{pmatrix}
    = \frac{-1}{2\Delta x}
    \begin{pmatrix}
        0 & 1 & 0 & -1 & 0  & 0 & 0 & 0 & 0 & 0  & 0 & 0 & 0 & 0 & 0  & 0 \\
        0 & 0 & 0 & 0  & -1 & 0 & 1 & 0 & 0 & 0  & 0 & 0 & 0 & 0 & 0  & 0 \\
        0 & 0 & 0 & 0  & 0  & 0 & 0 & 0 & 0 & -1 & 0 & 1 & 0 & 0 & 0  & 0 \\
        0 & 0 & 0 & 0  & 0  & 0 & 0 & 0 & 0 & 0  & 0 & 0 & 1 & 0 & -1 & 0 \\
        \hline
        0 & 0 & 0 & 0 & 0 & 0 & 0 & 0 & 0 & 0 & 0 & 0 & 0 & 0 & 0 & 0\\
        0 & 0 & 0 & 0 & 0 & 0 & 0 & 0 & 0 & 0 & 0 & 0 & 0 & 0 & 0 & 0\\
        0 & 0 & 0 & 0 & 0 & 0 & 0 & 0 & 0 & 0 & 0 & 0 & 0 & 0 & 0 & 0\\
        0 & 0 & 0 & 0 & 0 & 0 & 0 & 0 & 0 & 0 & 0 & 0 & 0 & 0 & 0 & 0\\
        0 & 0 & 0 & 0 & 0 & 0 & 0 & 0 & 0 & 0 & 0 & 0 & 0 & 0 & 0 & 0\\
        0 & 0 & 0 & 0 & 0 & 0 & 0 & 0 & 0 & 0 & 0 & 0 & 0 & 0 & 0 & 0\\
        0 & 0 & 0 & 0 & 0 & 0 & 0 & 0 & 0 & 0 & 0 & 0 & 0 & 0 & 0 & 0\\
        0 & 0 & 0 & 0 & 0 & 0 & 0 & 0 & 0 & 0 & 0 & 0 & 0 & 0 & 0 & 0\\
        0 & 0 & 0 & 0 & 0 & 0 & 0 & 0 & 0 & 0 & 0 & 0 & 0 & 0 & 0 & 0\\
        0 & 0 & 0 & 0 & 0 & 0 & 0 & 0 & 0 & 0 & 0 & 0 & 0 & 0 & 0 & 0\\
        0 & 0 & 0 & 0 & 0 & 0 & 0 & 0 & 0 & 0 & 0 & 0 & 0 & 0 & 0 & 0\\
        0 & 0 & 0 & 0 & 0 & 0 & 0 & 0 & 0 & 0 & 0 & 0 & 0 & 0 & 0 & 0\\        
    \end{pmatrix} \,.
\end{equation}

%-----------------------------------------------------------------------------------
%Appendix: Permutation Matrices
%-----------------------------------------------------------------------------------

\section{Derivation of the Permutation Matrices: $P_1, P_2^+,\text{ and } P_2^-$} \label{Permutation_Matrices}
Figure \ref{fig:P2P1} shows an example of how to create the $P^+=P_2^+P_1$ matrix for the $n_x=4$ case (a similar procedure exists for $P^-$). The $P_1$ operation is straightforward to implement using the $\times(n_x+1)\Mod{n_x^2}$ modular multiplication circuit given in \eqref{eqn:P1}. Since $n_x+1$ is odd, \cite{ModMult} provides a general implementation of the necessary modular multiplication circuit. However, we can considerably reduce the complexity of their circuit for our needs since the bottom $n_x^2-n_x$ rows are non-unique. The only limitation of these latter rows, represented by $C_2$ in Figure \ref{fig:P2P1}, is that they must form a unitary complement to the first $n_x$ rows, as discussed in Section \ref{MatrixDecomposition}.

Next, the $P_2^+$ operation from \eqref{eqn:P2p} increments each non-zero element forward by one. This is straightforward for each element except the $(n_x-1,n_x^2-1)^\text{th}$ element, which must be carried over. Conveniently, we can also simultaneously satisfy the periodic boundary condition if carried over to the $(n_x-1,n_x^2-n_x)^\text{th}$ element. This type of carryover is achieved by applying the incrementer on the first $\log n_x$-qubits. Note that while there are many different incrementer circuits as discussed in \cite{Incrementer}, we have chosen the multi-control NOT incrementer simply as a starting point to be improved upon later.

\begin{figure}[h]
  \centering
  \includegraphics[scale=0.5]{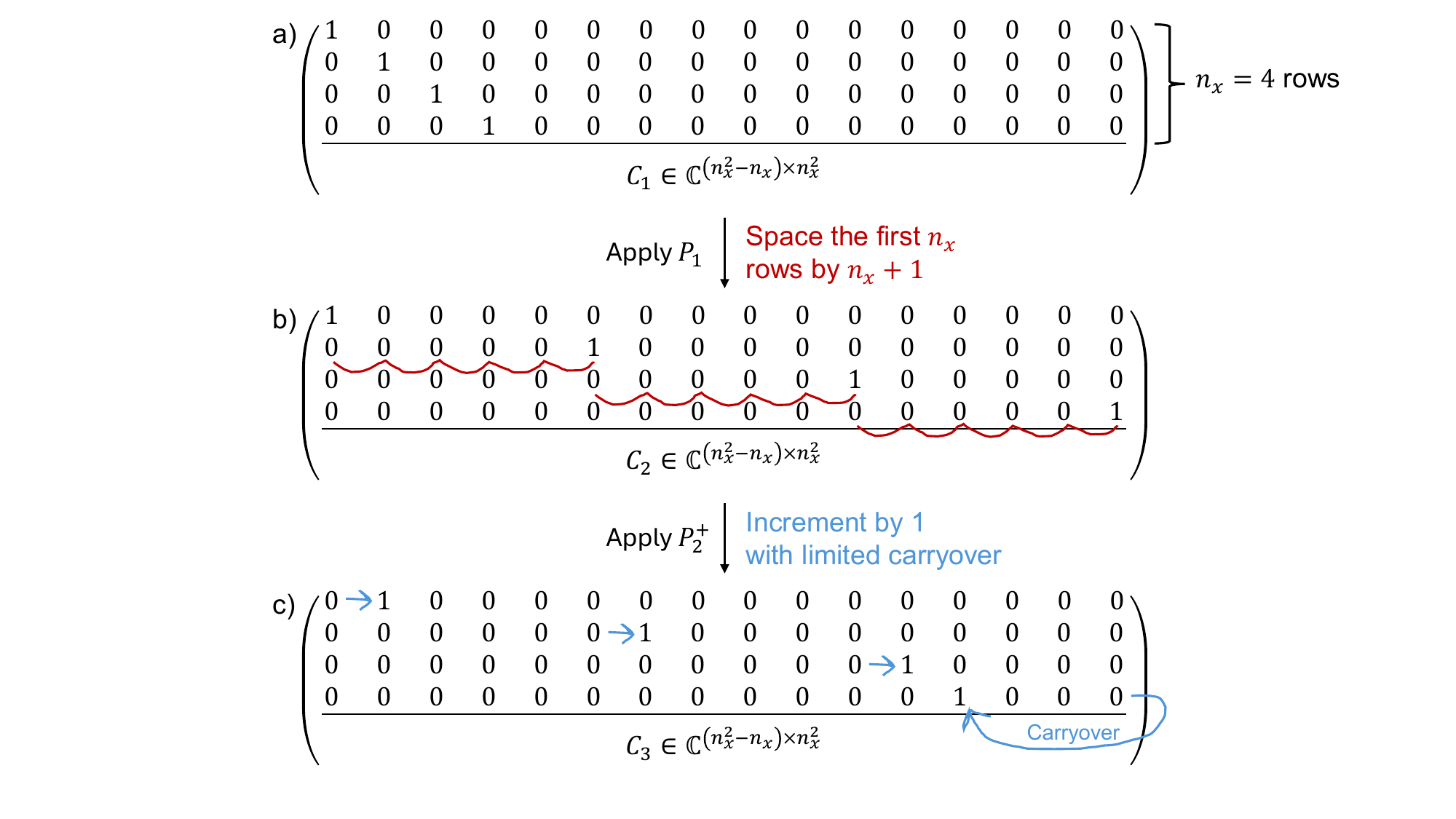}
  \caption{Operations to create the $P^+$ matrix for the $n_x=4$ case. The $P_1$ matrix performs the $\times(n_x+1)\Mod{n_x^2}$ modular multiplication to transform (a) the identity matrix into (b) an intermediary matrix where the first $n_x$ non-zero elements are spaced by $n_x+1$. Note that the $C_1$ matrix is the lower $n_x^2-n_x$ portion of the identity matrix in (a), and that the $C_2$ matrix is a non-unique unitary complement to the upper $n_x\times n_x^2$ elements in (b). Next, the $P_2^+$ matrix increments each non-zero element by one with limited carryover to transform the intermediary matrix into (c) the $P^+$ matrix. Once again, the $C_3$ matrix is a non-unique unitary complement to the upper $n_x\times n_x^2$ elements. There is an analogous transformation to prepare $P^-$.}
  \label{fig:P2P1}
\end{figure}
%\FloatBarrier
%\vspace*{-20mm}

%----------------------------------------------------------------------------------------------------
%Appendix: Proof for Theorem \ref{ACompletionTheorem}
%----------------------------------------------------------------------------------------------------

\section{Proof for Theorem \ref{ACompletionTheorem}} \label{ACompletionProof}
\begin{proof}
Here we prove that the unitary completion of $\mathcal{A}$, as defined in \eqref{eqn:GeneralAejp1j}, is $\overline{\mathcal{A}}$, as defined in \eqref{eqn:ABar}. As a consequence of Definition \ref{DefCompletionComplement}, if $\overline{\mathcal{A}}$ is the unitary completion to $\mathcal{A}$, then $U$ is unitary where 
\begin{equation} \notag
    U = \begin{pmatrix} \mathcal{A}^c & \mathcal{A} \\ \mathcal{A} & \mathcal{A}^c \end{pmatrix} \,.
\end{equation}
Therefore, to prove that $\overline{\mathcal{A}}$ is the unitary completion to $\mathcal{A}$, it is sufficient to show that $U$ is unitary. Since all of the matrices used in our particular decomposition are real, the conjugate transpose is equivalent to the transpose. Therefore, we start with
\begin{equation} \label{eqn:UUT}
\begin{split}
    UU^T &= (I \otimes \mathcal{A}^c + \sigma_0\otimes \mathcal{A})(I \otimes \mathcal{A}^{cT} + \sigma_0\otimes \mathcal{A}^T) \\
    &= I\otimes \mathcal{A}^c\mathcal{A}^{cT} + \sigma_0\otimes \mathcal{A}^c\mathcal{A}^T + \sigma_0\otimes \mathcal{A}\mathcal{A}^{cT} + I\otimes \mathcal{A}\mathcal{A}^T  \,.
\end{split}
\end{equation}
From Definition \ref{DefCompletionComplement}, the first term can be expanded using
\begin{equation}
    \mathcal{A}^c\mathcal{A}^{cT} = (\overline{\mathcal{A}}-\mathcal{A})(\overline{\mathcal{A}}^T-\mathcal{A}^T) \,,
\end{equation}
where
\begin{equation} \notag
\begin{split}
    \overline{\mathcal{A}}\overline{\mathcal{A}}^T &= \Bigl( \bigotimes_{k=0}^{Q_1-1} \overline{\rho}_{r_k}\overline{\rho}_{r_k}^T \Bigr) 
    \otimes I_{n_x}^{\otimes j+1} \,, \\
    \overline{\mathcal{A}}\mathcal{A}^T &= \Bigl( \bigotimes_{k=0}^{Q_1-1} \overline{\rho}_{r_k}\rho_{r_k}^T \Bigr) 
    \otimes \mathcal{D} \otimes I_{n_x}^{\otimes j-1} \,, \\
    \mathcal{A}\overline{\mathcal{A}}^T &= \Bigl( \bigotimes_{k=0}^{Q_1-1} \rho_{r_k}\overline{\rho}_{r_k}^T \Bigr) 
    \otimes \mathcal{D} \otimes I_{n_x}^{\otimes j-1} \,, \\
    \mathcal{A}\mathcal{A}^T &= \Bigl( \bigotimes_{k=0}^{Q_1-1} \rho_{r_k}\rho_{r_k}^T \Bigr) 
    \otimes \mathcal{D} \otimes I_{n_x}^{\otimes j-1} \,,
\end{split}
\end{equation}
where we have used the mixed-product property. These relations can be simplified. First, since $\overline{\rho}_{r_k}\in\{\sigma_0,\sigma_3\}$, then $\overline{\rho}_{r_k}\overline{\rho}_{r_k}^T=I$ and therefore $\overline{\mathcal{A}}\overline{\mathcal{A}}^T=I$. Next, using the result from Table \ref{table:ACompTable1}, it follows that $\overline{\mathcal{A}}\mathcal{A}^T=\mathcal{A}\overline{\mathcal{A}}^T=\mathcal{A}\mathcal{A}^T$ . Putting this all together yields $\mathcal{A}^c\mathcal{A}^{cT}=I-\mathcal{A}\mathcal{A}^T$. Using these same properties we have
\begin{equation} \notag
    \mathcal{A}^c\mathcal{A}^T = (\overline{\mathcal{A}}-\mathcal{A})\mathcal{A}^T = \overline{\mathcal{A}}\mathcal{A}^T-\mathcal{A}\mathcal{A}^T = 0 \,,
\end{equation}
and
\begin{equation} \notag
    \mathcal{A}\mathcal{A}^{cT} = \mathcal{A}(\overline{\mathcal{A}}^T - \mathcal{A}^T) = \mathcal{A}\overline{\mathcal{A}}^T - \mathcal{A}\mathcal{A}^T = 0 \,.
\end{equation}

\begin{table}
\begin{center}
\renewcommand{\arraystretch}{1.5}
\begin{tabular}{m{0.08\linewidth}| m{0.1\linewidth} | m{0.1\linewidth} | m{0.12\linewidth} | m{0.1\linewidth} }
 \hline
   & $\rho_{r_k}=\rho_0$ & $\rho_{r_k}=\rho_1$ & $\rho_{r_k}=\rho_2$ & $\rho_{r_k}=\rho_3$\\ 
 \hline\hline
 $\overline{\rho}_{r_k} \rho_{r_k}^T=$ & $\rho_4\rho_0^T=\rho_0$ & $\sigma_0\rho_1^T=\rho_0$ & $\sigma_0\rho_2\rho_2^T=\rho_3$ & $\rho_4\rho_3^T=\rho_3$\\ 
 \hline
 $\rho_{r_k}\overline{\rho}_{r_k}^T=$ & $\rho_0\rho_4^T=\rho_0$ & $\rho_1\sigma_0^T=\rho_0$ & $\rho_2\sigma_0^T=\rho_3$ & $\rho_3\rho_4^T=\rho_3$\\ 
 \hline
 $\rho_{r_k}\rho_{r_k}^T=$ & $\rho_0\rho_0^T=\rho_0$ & $\rho_1\rho_1^T=\rho_0$ & $\rho_2\rho_2^T=\rho_3$ & $\rho_3\rho_3^T=\rho_3$\\ 
 \hline
\end{tabular}
\end{center}
\caption{A table to show that $\overline{\rho}_{r_k} \rho_{r_k}^T=\rho_{r_k}\overline{\rho}_{r_k}^T=\rho_{r_k}\rho_{r_k}^T$.}
\label{table:ACompTable1}
\end{table}

Finally, by evaluating the expressions for $\mathcal{A}^c\mathcal{A}^{cT}$, $\mathcal{A}^c\mathcal{A}^T$, and $\mathcal{A}\mathcal{A}^{cT}$ into \eqref{eqn:UUT} we find that $UU^T=I$ as expected. Furthermore, $UU^T=I \implies U^T=U^{-1}$ since its inverse is unique, which proves that $U$ is unitary by definition.
\end{proof}

%----------------------------------------------------------------------------------------------------
%Appendix: Proof for Theorem \ref{U1U2}
%----------------------------------------------------------------------------------------------------

\section{Proof for Theorem \ref{U1U2}} \label{U1U2Proof}
\begin{proof}
    First, we show that $U=U_1U_2$. By expanding $U_1U_2$ out we can see that $U=U_1U_2$ if two conditions are met: $\mathcal{A}=\mathcal{A}\mathcal{A}^T\overline{\mathcal{A}}$ and $\mathcal{A}^c=(I-\mathcal{A}\mathcal{A}^T)\overline{\mathcal{A}}$. In Appendix \ref{ACompletionProof} it was shown that $\mathcal{A}\mathcal{A}^T=\mathcal{A}\overline{\mathcal{A}}^T$ and $\overline{\mathcal{A}}^T\overline{\mathcal{A}}=I$. Using these relations together gives $\mathcal{A}\mathcal{A}^T\overline{\mathcal{A}}=\mathcal{A}\overline{\mathcal{A}}^T\overline{\mathcal{A}}=\mathcal{A}I=\mathcal{A}$. From this, it follows that $(I-\mathcal{A}\mathcal{A}^T)\overline{\mathcal{A}}=\overline{\mathcal{A}}-\mathcal{A}\mathcal{A}^T\overline{\mathcal{A}}=\overline{\mathcal{A}}-\mathcal{A}=\mathcal{A}^c$, where the last step is given in Definition \ref{DefCompletionComplement}. Since both conditions are met, it holds that $U=U_1U_2$.

    Next, we show that both $U_1$ and $U_2$ are unitary. Since all of the matrices used in our particular decomposition are real, the conjugate transpose is equivalent to the transpose. Starting with $U_1$ we have
    \begin{equation} \notag
        U_1U_1^T = 
        \begin{pmatrix}
            (I-\mathcal{A}\mathcal{A}^T)^2 + \mathcal{A}\mathcal{A}^T\mathcal{A}\mathcal{A}^T & 2(\mathcal{A}\mathcal{A}^T-\mathcal{A}\mathcal{A}^T\mathcal{A}\mathcal{A}^T) \\
            2(\mathcal{A}\mathcal{A}^T-\mathcal{A}\mathcal{A}^T\mathcal{A}\mathcal{A}^T) & (I-\mathcal{A}\mathcal{A}^T)^2 + \mathcal{A}\mathcal{A}^T\mathcal{A}\mathcal{A}^T
        \end{pmatrix} \,.
    \end{equation}
    By \eqref{eqn:GeneralAejp1j} we have
    \begin{equation} \label{eqn:AAT}
        \mathcal{A}\mathcal{A}^T = (\bigotimes_{k=0}^{Q_1-1} \rho_{r_k}\rho_{r_k}^T)\otimes \mathcal{D} \otimes 
        I_{n_x}^{\otimes j-1} \,.
    \end{equation}
Using the fact that $(\rho_{r_k}\rho_{r_k}^T)\in \{\rho_0,\rho_3,\rho_4\}$ for $r_k\in \{0,\dots,4\}$, it follows that $\mathcal{A}\mathcal{A}^T$ is idempotent such that $\mathcal{A}\mathcal{A}^T\mathcal{A}\mathcal{A}^T=\mathcal{A}\mathcal{A}^T$. From this property it follows that $\mathcal{A}\mathcal{A}^T-\mathcal{A}\mathcal{A}^T\mathcal{A}\mathcal{A}^T=0$ and $(I-\mathcal{A}\mathcal{A}^T)^2+\mathcal{A}\mathcal{A}^T\mathcal{A}\mathcal{A}^T=I$ so $U_1U_1^T=I$. Since $U_1=U_1^T$ it follows that $U_1^TU_1=U_1U_1^T$ and therefore $U_1$ is unitary.

\noindent Finally, to show that $U_2$ is unitary we first note that $\overline{\mathcal{A}}$ as defined in \eqref{eqn:ABar} is unitary. Then we have
    \begin{equation} \notag
        U_2U_2^T = 
        \begin{pmatrix} \overline{\mathcal{A}} & 0 \\ 0 & \overline{\mathcal{A}} \end{pmatrix}
        \begin{pmatrix} \overline{\mathcal{A}}^T & 0 \\ 0 & \overline{\mathcal{A}}^T \end{pmatrix}
        =
        \begin{pmatrix} \overline{\mathcal{A}}\overline{\mathcal{A}}^T & 0 \\ 0 & \overline{\mathcal{A}}\overline{\mathcal{A}}^T \end{pmatrix} 
        = I \,,
    \end{equation}
where we once again use the property that $\overline{\mathcal{A}}\overline{\mathcal{A}}^T=I$. Since the inverse is unique, it follows that $U_2U_2^T=U_2^TU_2$ and, therefore, $U_2$ is also unitary. 
\end{proof}

%---------------------------------------------------------------------------
%Bibliography
%---------------------------------------------------------------------------

\newpage
\twocolumngrid
\bibliography{bibliography.bib}

\end{document}